\title{Safe and efficient collision avoidance control for autonomous vehicles 
}
\author{
    Qiang Wang\inst{1,2} \and
    Dachuan Li\inst{2,3} \and
    Joseph Sifakis* \inst{2,4}
}
\institute{
SUSTech Academy for Advanced Interdisciplinary Studies, Shenzhen, China \and
School of Computer Science and Engineering, SUSTech , Shenzhen, China \and
SUSTech Intelligent transportation Center, Shenzhen, China \and
Verimag, Universit\'e Grenoble Alpes, France
}
\begin{document}

\maketitle

\begin{abstract}
We study a novel principle for safe and efficient collision avoidance that adopts a mathematically elegant and general framework abstracting as much as possible from the controlled vehicle's dynamics and of its environment.  Vehicle dynamics is characterized by pre-computed functions for accelerating and braking to a given speed. Environment is modeled by a function of time giving the free distance ahead of the controlled vehicle under the assumption that the obstacles are either fixed or are moving in the same direction. The main result is a control policy enforcing the vehicle’s speed so as to avoid collision and efficiently use the free distance ahead, provided some initial safety condition holds. 

The studied principle is applied to the design of  two discrete controllers, one synchronous and another asynchronous. We show that both controllers are safe by construction. Furthermore, we show that their efficiency strictly increases for decreasing granularity of discretization. We present implementations of the two controllers, their experimental evaluation in the Carla autonomous driving simulator and investigate various performance issues.
\end{abstract}

\begin{keywords}
Safe and efficient collision avoidance, Autonomous vehicles, Model based design
\end{keywords}

\section{Introduction}
\label{introduction}

As a fundamental requirement for autonomous vehicle control, the problem of collision avoidance has been widely investigated using a large variety of approaches and frameworks. 
 These involve control-based techniques,  game theory,  formal methods including reachability analysis and logic-based controller synthesis or the design of specific protocols. 
 Furthermore, the assumptions underlying the adopted frameworks vary regarding the level of modeling of the dynamics of the controlled vehicle, 
 the number of vehicles and the type of their trajectories or the nature of the controller stimuli.

In this problem, the key issue  is  the development of control algorithms of tractable complexity guaranteeing collision avoidance and making efficient use of the available space. 
 It should be emphasized that most of the existing results fail to satisfy at least one of these requirements. 
 Most results focus on performance optimization and only partially satisfy safety requirements. 
 Some results involve decision processes requiring computationally heavy analysis and others propose theoretically correct solutions that are not robust when discretized. 
 Finally, some results put emphasis exclusively on safety under various scenarios and neglect performance which is not acceptable for cars; 
 lack of performance can become a safety issue as for instance in an overtaking maneuver.

We study a novel principle for safe and efficient collision avoidance.
 We adopt a mathematically simple and general framework making abstraction of the controlled vehicle's specific dynamics and of its environment,
 and using only three functions:
 (1) the free distance function $F(t)$ which determines for the vehicle the estimated free distance from the closest obstacle ahead at time $t$;
 (2) the accelerating function $A(V, v)$ which gives the distance travelled by the vehicle when accelerating from initial speed $V$ to speed $v$;
 (3) the braking function $B(V, v)$ which gives the distance travelled by the vehicle when braking from $V$ to speed $v$ ($v < V$).

The principle consists in the application of a simple induction rule.
 If at some time $t$ the speed of the vehicle with respect to the distance $F(t)$ is safe, i.e. $B(V, 0) \leq F(t)$,
 then the speed will be controlled to remain safe under the assumption 
 that $F(t)$ does not change faster than the vehicle can brake.
 This assumption always holds when the obstacles ahead are fixed or move in the same direction as the controlled vehicle.
 Furthermore, if safety can be guaranteed for speed $V$ and $B(V, 0) \leq F(t)$
 then in order to efficiently use the available space $F(t) - B(V,0)$ we apply an A/B (Accelerating/Braking) policy:
 we accelerate to a certain speed $v > V$, such that after acceleration it is still possible to safely brake from $v$.
 So efficiency boils down to computing the maximum target speed $v$ such that $0 \leq F - (A(V, v)+ B(v, 0))$.
 This computation may be costly depending on the properties of the accelerating and braking distance functions.
 The control principle consists in the dynamic application of A/B policies for a set of possible speed levels between speed 0 and the limit speed of the vehicle.
 For each speed level, it uses precomputed conditions for safely switching to adjacent speed levels depending on the free distance ahead.

We provide two different controllers for safe and efficient collision avoidance.
 The first controller is synchronous driven by periodically sampled values of the free distance $F$.
 The second controller is asynchronous receiving sporadically  available values of $F$.
 We prove that both controllers are safe and efficient, where efficiency means that based on the most recent value of $F$, 
 getting closer to the obstacle ahead may jeopardize safety.
 We also present their implementations and experimental evaluations in Carla autonomous driving simulator 
 and investigate various performance issues.

Our approach is characterized by the following:
\begin{enumerate}
\item It makes abstraction of the vehicle dynamics through the use of accelerating and braking functions that provide all the information needed for safe and efficient control. These functions are a kind of  "contact" between the controller and the controlled vehicle. Their use frees us from the obligation to model vehicle dynamics. Furthermore, it leaves completely open the way features related to comfort such as the jerk profile are implemented.

\item  Although we consider the problem in one dimension and the environment is modeled by a free distance function $F(t)$, the result can be easily extended to two dimensions. 
In that case $F(t)$ and $B(v,V)$ become areas and the safety test consists in checking their inclusion.

\item The control principle is robust and easy to adapt to varying uncertainty in the measurement of F or in the estimation of the functions $A$ and $B$.

\item The proposed implementations do not have any specific hardware requirements and require very limited computing resources as they combine pre-computed control policies.

\item Finally, the adopted control principle is simple and inductive: if at some step the distance is safe then a speed increase by some quantity will not jeopardize safety.  This induction hypothesis is used to prove correctness.
\end{enumerate}

The paper is organized as follows.
 Section~\ref{relatework} reviews related work.
 Section~\ref{sec:collision} presents the framework and the principle of safe and efficient collision avoidance control.
 Section \ref{sec:design} presents the design of the collision avoidance controllers.
 Section~\ref{impl-expr} presents the implementations and performance evaluations using the Carla simulator.
 Section~\ref{conclusions} concludes about the relevance of the results and outlines directions for future work.

\section{Related work}
\label{relatework}

Collision avoidance has been extensively studied in the context of adaptive cruise control.
 Most work addresses the problem by applying optimization techniques \cite{zhang2017optimization,park2009obstacle,wang2019obstacle}.
 For instance,  \cite{zhang2017optimization} models the ego vehicle and the obstacles around as convex sets,
 and  generates collision-free trajectories by solving a set of smooth non-convex constraints.
 In \cite{park2009obstacle}, safe trajectories are calculated based on a non-linear model predictive control approach for both lateral and longitudinal movements.
 The work in \cite{wang2019obstacle} applies the concept of artificial potential field and identifies five stages in the process of obstacle avoidance.
 In \cite{provebecorrectacc}, a hierarchical framework consisting of a nominal controller and an emergency controller has been studied. 
 The former is based on model predictive control (MPC) strategy and operates under normal conditions 
 to achieve passenger comfort without considering safety guarantee, 
 while the latter takes over if the headway approaches clearance distance constraints and ensures provable safety. 
 However, the scheme considers multiple leading vehicles and designs a controller for each of them, thus incurring increased computational cost.
 Finally, several works deal with collision avoidance methods relying on rich environment information
 (e.g., position, speed, width of the surrounding vehicles) from V2I or V2V communication \cite{milanes2012fuzzy,li2014rear}.
 Despite the promising results achieved by such approaches,
 optimization-based and potential field-based collision avoidance strategies do not allow safety guarantees,
 which are essential for autonomous vehicles.
 Additionally, they may lead to high computation cost in real-world implementations.

To ensure guaranteed safety and achieve correctness-by-construction,
 Mobileye \cite{shalev2017formal} advocated the application of model-based techniques.
 The proposed concept of Responsibility Sensitive Safety (RSS) relies on the computation of the safe distance between vehicles.
 It is argued that if a vehicle maintains the required safe distance from other vehicles,
 it will never be responsible for an accident even if it might still become involved in an accident.
 Different estimates of safety distance are proposed under the assumption of constant response time for acceleration and deceleration.
 Furthermore, in order to avoid the unnecessarily large gap between vehicles caused by situation-unaware strategies in \cite{shalev2017formal},
 it is shown how to improve the safe distance conditions by taking into account the state of the ego-vehicle (pause, acceleration, deceleration) \cite{li2018situation}.
 Nonetheless, this work focuses only on conditions guaranteeing safety and does not address control issues.
Similarly, nVIDIA proposes a formal safety model, namely the Safety Force Field (SFF) \cite{nister2019safety},
 which brings in the concept of Safety Potential to evaluate the safety of traffic actors.
 An SFF function is defined to derive safety procedures that move an actor down the gradient of safety potential,
 resulting in actors repelling from each other when safety procedures are about to overlap.
 As RSS, SFF exclusively focuses on safety, and does not address efficiency issues.
 As a matter of fact, a study \cite{zhao2019rightofway} reveals that simply implementing RSS requirements leads to undesirable clearance distance and thus to decrease of traffic efficiency.

Model-based design for autonomous vehicles have been an active research area since 30 years.
 In the California PATH (Partners for Advanced Transportation and Highways) program,
 the concept of platoon has been proposed to mitigate the highway congestion.
 A platoon is a group of closely spaced vehicles under automatic control.
 In \cite{varaiya1993smart}, the design of platoon controllers has been investigated and
 a multi-layer automated highway system architecture has been proposed in order to achieve a fully automated platoon control.
 In \cite{lygeros1998verified},
 the analysis of \cite{varaiya1993smart} is refined using hybrid controllers and sufficient safety conditions are provided.
 Finally, \cite{horowitz2000control} presents the safety and performance analysis of a hybrid system modeling an autonomous vehicle.

There are several works involving application of formal methods to autonomous vehicle control.
 In \cite{stursberg2004verification}\cite{sadraddini2017provably}, 
 safety of the adaptive cruise control is verified by predicting and checking reachable states of ego and other vehicles, which is however computationally extensive.
 In \cite{ames2014control}, barrier certificates provide safety guarantee by defining a `barrier' that prevents transitions from safe states to unsafe ones.
 \cite{asplund2012formal} studies the distributed coordination of autonomous vehicles in order to avoid collisions in intersections.
 The coordination protocol is modeled in a constraint specification language and the automated constraint solver (i.e. Z3) is used to verify safety.
 \cite{krook2019design} presents the design and formal verification of a supervisor switching
 between nominal planners and a safe stop routine if nominal operational conditions are violated.
 In \cite{korssen2017systematic}, a supervisor for an advanced driver assistance system is automatically synthesized from the specifications modeled by finite state machines.
 The correctness of the switching logic is also formally verified.
 In \cite{nilsson2015correct}, a controller is synthesized from linear temporal logic specifications for adaptive cruise control.
 In \cite{hilscher2011abstract} an approach is presented for proving collision freedom of multi-lane traffic with lane-change maneuver.
 The multi-lane motorway traffic is modeled as an abstract transition system and the collision freedom property is specified in spatial logic.
 Then the safety verification problem boils down to checking that  the abstract transition system satisfies spatial logic formulas.
 In \cite{esterle2019specifications},  linear temporal logic is used to formalize traffic rules for both overtaking and merging maneuvers.
 Furthermore, these rules are verified on the automata modeling the behavior of an autonomous vehicle.
 A motion planner modeled as a maneuver automaton is presented in \cite{atva18motionplanner}.
 For each state of this model a particular motion control primitive is applied.
 The desired plan is specified by a formula of linear temporal logic, and logical correctness is reduced to to checking satisfiability of the formula.
 Finally, several papers discuss the application of formal verification to the decision and control software of autonomous vehicles e.g., \cite{loos2011adaptive,zita2017application}.

\section{Safe and efficient collision avoidance control}
\label{sec:collision}

The aim is to control the movement of a vehicle travelling in a one-way lane,
 so as to 1) avoid collision with other objects that may be fixed or moving in the same direction (i.e., safety);
 and 2) use the available free distance ahead in the best possible manner to minimize travelling time (i.e., efficiency).

Our work relies on a mathematically abstract framework characterized by three functions.
 We denote by $v$ the speed variable of the vehicle and by $V$ its initial speed.

\begin{itemize}
  \item The function $F(t)$ gives the free distance at time $t$ between the controlled vehicle and the closest obstacle ahead, which either moves in the same direction or is stopped.
  \item  The braking function $B(V, v)$ is a partial function defined in the interval $0 \leq v \leq V$.
   It gives the distance travelled by the controlled vehicle when braking from the initial speed $V$ to a target speed $v$.
   In Fig.\ref{functions} it is graphically illustrated by the green curves.
   When the target speed $v = 0$ (i.e, the vehicle brakes to stop), this function is abbreviated as $B(V)$ for simplicity.
  \item  The accelerating function $A(V, v)$ is a partial function defined in the interval $V \leq v \leq V_L$,
  where $V_L$  is a given  limit speed for each vehicle.
  It gives the distance travelled by the vehicle when accelerating from an initial speed $V$ to a target speed $v$.
  In Fig.\ref{functions} it is graphically illustrated by the black curves.
\end{itemize}

\begin{figure}[h]
\centering
\includegraphics[width=0.8\textwidth]{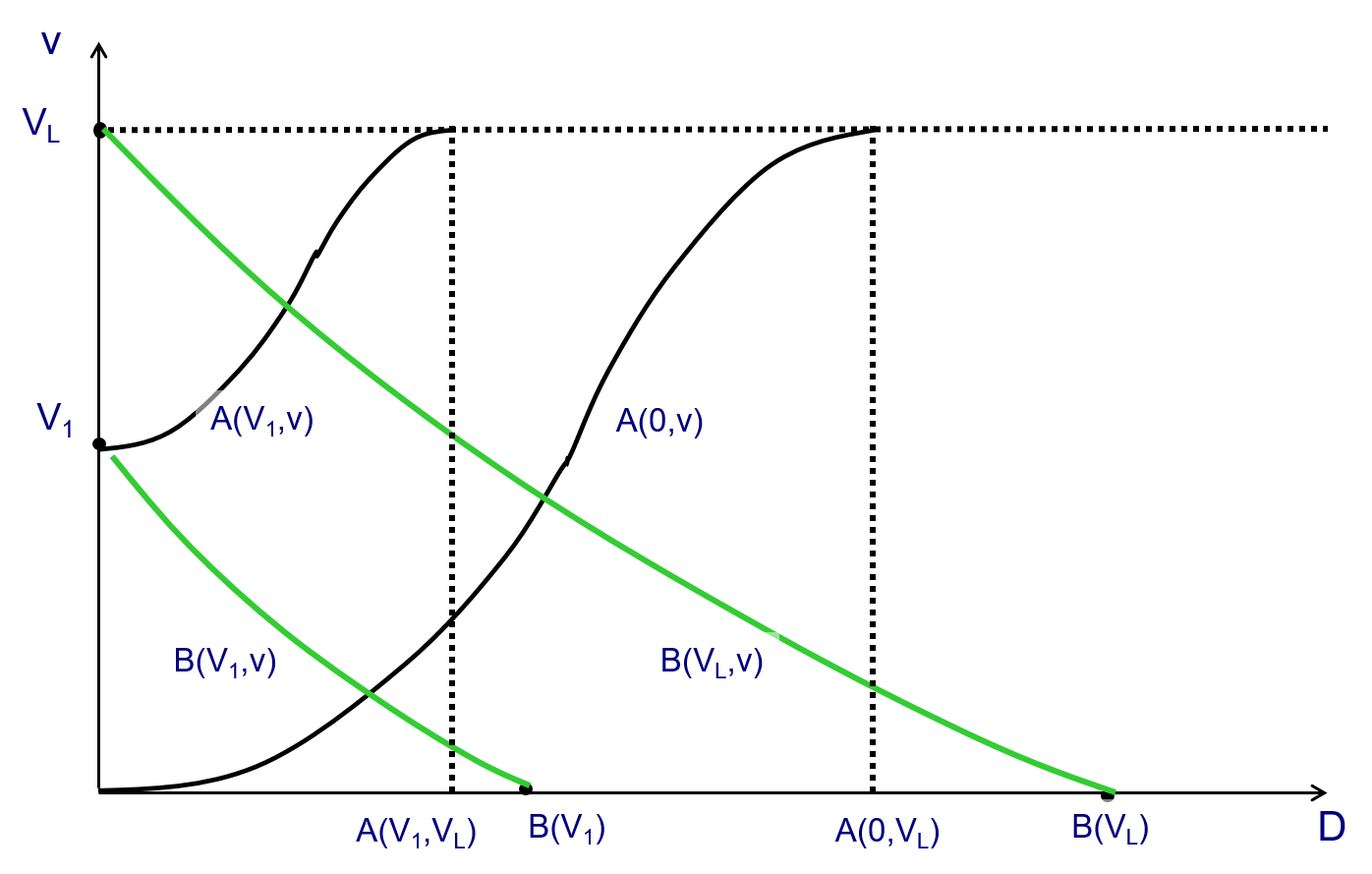}
\caption{Braking and acceleration distance functions ($D$ is the distance travelled and $v$ is the speed)}
\label{functions}
\end{figure}

We make no specific assumptions about the implementation of accelerating and braking functions, 
 e.g. whether acceleration and deceleration are constant or variable.
 Nonetheless, we require that the following properties hold:

\begin{itemize}
  \item  $B(V,V) = 0$ and $A(V,V) = 0$.
  \item  Additivity property:
  \begin{equation*}
  B(V, v_1) + B(V_1, v_2) = B(V, v_2), ~~ where ~~ v_1 = V_1
  \end{equation*}
  \begin{equation*}
  A(V, v_1) + A(V_1, v_2) = A(V, v_2), ~~ where ~~ v_1 = V_1
  \end{equation*}
  \item  Strict monotonicity:
  \begin{equation*}
 B(V, v_1) < B(V, v_2), ~~ when ~~ v_1 < v_2
  \end{equation*}
  \begin{equation*}
 A(V, v_1) < A(V, v_2), ~~ when ~~ v_1 < v_2
  \end{equation*}
\end{itemize}

The additivity property implies that for $0 \leq j < i \leq n$,
 \begin{equation*}
 B(v_i, v_j) =  \sum_{k=0}^{i-j-1} B(v_{i-k}, v_{i-k-1}) 
 \end{equation*}
  \begin{equation*}
 A(v_j, v_i) = \sum_{k=0}^{i-j-1} A(v_{j+k}, v_{j+k+1})
  \end{equation*} 
 This says that the distance needed to brake or accelerate to a given speed is the same 
 no matter how braking and acceleration commands have been applied.

As an example, when acceleration and deceleration rates are positive constant, respectively $a>0$ and $b>0$, these functions are given by the following formulas:
 \begin{equation*}
 B(V, v) = (V^2 - v^2)/ (2*b)
 \end{equation*}
 \begin{equation*}
 A(V, v) = (v^2  - V^2) / (2*a)
 \end{equation*}

We progressively study the safe and efficient collision avoidance problem for a vehicle moving in a one-way lane.
 We first study the problem for a stationary obstacle ahead.
 Then we study algorithms that solve the problem for dynamically changing free distance.
 We assume that the movement is controlled using commands for accelerating and braking from a speed $V$ to some target speed $v$
 whose effect is modeled by the functions $A(V,v)$ and $B(V,v)$, respectively.

\subsection{Control for safety}

The basic idea for avoiding collision is to moderate the speed of the vehicle and
 anticipate the changes of the free space ahead so as to have enough distance and time to adjust and brake.
 If the vehicle moves with speed $V$ at time $t$,
 then for safety the free space ahead $F(t)$ should be longer than the braking distance $B(V)$,
 which is the minimal safe braking distance for speed $V$.
 The Theorem below formalizes this idea.

\begin{theorem} \label{theorem1}
If at time $t$ the speed $V_t$ of the vehicle is safe with respect to $F(t)$, i.e., $B(V_t) \leq F(t)$
 and for any time $t + \triangle t$ it is possible to set the speed to a value $V_{t + \triangle t}$
 such that the condition $F(t) - F(t+\triangle t) \leq B(V_t) - B(V_{t+\triangle t})$ holds,
 then the vehicle is always safe.
\end{theorem}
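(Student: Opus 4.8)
The plan is to recognize that the claim is an \emph{invariance} (safety-preservation) statement and to prove it by reducing the hypothesized step condition to the monotonicity of a single ``safety margin'' quantity. I define the margin $M(s) = F(s) - B(V_s)$ at every time $s \geq t$, so that the target property ``the vehicle is safe at time $s$'' is by definition exactly $M(s) \geq 0$. The base case is then immediate: the hypothesis $B(V_t) \leq F(t)$ says precisely $M(t) \geq 0$.

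The key step is a one-line rearrangement of the hypothesized condition. Starting from
\begin{equation*}
F(t) - F(t + \triangle t) \leq B(V_t) - B(V_{t + \triangle t}),
\end{equation*}
I would move the braking terms to the left and the free-distance terms to the right to obtain
\begin{equation*}
F(t) - B(V_t) \leq F(t + \triangle t) - B(V_{t + \triangle t}),
\end{equation*}
that is, $M(t) \leq M(t + \triangle t)$. Thus the very condition the controller is assumed to be able to enforce is equivalent to saying that the safety margin never decreases from one instant to the next. Combining this with $M(t) \geq 0$ gives $M(t + \triangle t) \geq M(t) \geq 0$, i.e. $B(V_{t + \triangle t}) \leq F(t + \triangle t)$, which is exactly safety at the later time.

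To conclude ``always safe'' I would iterate this step. Reading the hypothesis as a per-step guarantee, I argue by induction over the sequence of control instants $t = t_0 < t_1 < t_2 < \cdots$: assuming safety at $t_k$ (inductive hypothesis $M(t_k) \geq 0$), the rearrangement above applied to the step from $t_k$ to $t_{k+1}$ yields $M(t_{k+1}) \geq M(t_k) \geq 0$, so safety propagates to $t_{k+1}$ and hence to every future instant. If instead one reads the condition as holding against the fixed reference time $t$ for all $\triangle t$, the same single rearrangement already gives $M(t + \triangle t) \geq 0$ directly, with no induction; I would remark that the two readings coincide because the margin inequality telescopes.

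I do not expect a genuine obstacle here: the result is a pure invariance argument and, notably, it does not use the additivity or strict-monotonicity properties of $A$ and $B$ --- only that $B(V_s)$ is a well-defined, nonnegative braking distance. The one point requiring care is the meaning of ``always'': I must ensure that the inductive chaining (or, equivalently, the telescoping of the margin) genuinely covers all future times, and that each chosen $V_{t + \triangle t}$ lies in the admissible range so that $B(V_{t + \triangle t})$ is defined. The substantive content is entirely in identifying $F - B$ as the conserved (non-decreasing) safety margin; everything else is bookkeeping.
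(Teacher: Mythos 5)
Your proposal is correct and follows essentially the same route as the paper: the paper's own proof consists precisely of rearranging the step condition into $0 \leq F(t) - B(V_t) \leq F(t+\triangle t) - B(V_{t+\triangle t})$ and combining it with the initial safety assumption, which is your non-decreasing margin $M(s) = F(s) - B(V_s)$. Your explicit induction over control instants and the remark that no additivity or monotonicity of $B$ is needed are just a more careful spelling-out of what the paper leaves implicit.
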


\begin{proof}
The condition $F(t) - F(t+\triangle t) \leq B(V_t) - B(V_{t + \triangle t})$ relates changes of $F(t)$ to the changes of speed $V$.
 It simply says that the free space ahead does not change faster than the distance that the vehicle travels in some interval $\triangle t$.
 It can be deduced from the safety assumption $0 \leq F(t)- B(V_t)$ and from the condition that $0 \leq F(t) - B(V_t) \leq F(t + \triangle t)- B(V_{t + \triangle t})$.
\end{proof}

\begin{figure}[h]
\centering
\includegraphics[width=\textwidth]{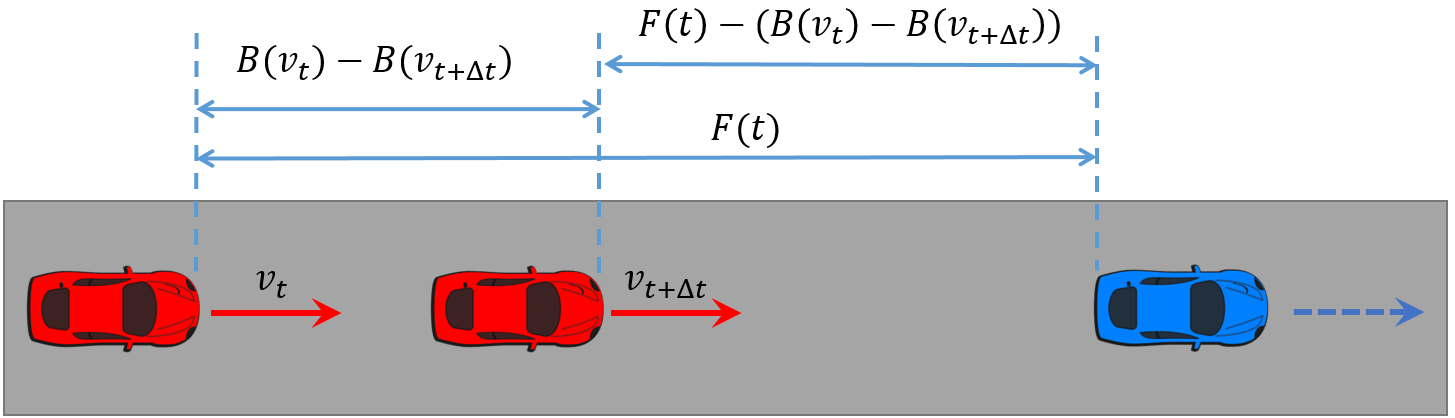}
\caption{Illustration of the control for safety}
\label{fig:safety}
\end{figure}

Notice as an application of the above theorem, that if the vehicle brakes from speed $V_t$ and the obstacles ahead do not move in the opposite direction,
 then the condition $F(t) - F(t+\triangle t) \leq B(V_t) - B(V_{t+\triangle t})$ trivially holds.
 In fact, when the vehicle brakes from $V_t$ for time $\triangle t$,
 it will reach the speed $V_{t + \triangle t} < V_t$ and it will have traveled the distance 
 $B(V_t, V_{t+\triangle t}) = B(V_t) - B(V_{t + \triangle t})$,
 by application of the additivity property.
 Then we have that $F(t) - (B(V_t) - B(V_{t + \triangle t}))$ is the distance ahead at time $t + \triangle t$ for the controlled vehicle,
 as shown in Fig.\ref{fig:safety}.
 By the assumption that the obstacles are moving forward or stopped, 
 we have that $F(t) - (B(V_t) - B(V_{t + \triangle t})) \leq F(t+\triangle t)$.
 Thus Theorem\ref{theorem1} can trivially be applied if obstacles ahead do not move in the opposite direction.

This theorem suggests a simple and safe control policy that ensures collision freedom.
 For any time $t$, the vehicle only needs to keep track of the free distance ahead $F(t)$ and
 check in real-time whether $F(t)$ is greater than the minimal safe braking distance $B(V_t)$ for the current speed $V_t$.
 It starts braking as soon as $F(t)$ reaches the minimal safe braking distance.
 In this way, it is guaranteed that if the obstacles ahead do not move in the opposite direction, no collision would happen.

\subsection{Achieving efficiency for fixed obstacles}

The above result provides a basis for ensuring collision freedom.
 Nonetheless, it leaves open the question of how the vehicle can efficiently use the available distance ahead by minimizing the travelling time.
 What would be an efficient driving policy when the free headway distance is greater than the minimal safe braking distance?
 We consider that a policy defines the speed function $v(t)$ in response to a free distance $F(t)$.
 An Accelerating/Braking policy (A/B policy) is a policy of accelerating first to some speed and then braking.
 Similarly, an Braking/Accelerating policy (B/A policy) is the policy of braking first to some speed and then accelerating.
 A Constant speed/Braking policy (C/B policy) is the policy of moving at constant speed and then braking.
 A policy  is safe if the relative distance between the controlled vehicle and the obstacle ahead is positive. 
 It is efficient if increasing the speed value $v(t)$ enforced by the policy at any point would compromise safety.

The problem is to minimize the travelling time for a given distance, which implies to maximize the average speed.
 Consider the scenario where the speed of the vehicle is $V$ and there is a stationary obstacle ahead at distance $F$, which is greater than the braking distance $B(V)$.
 The application of an A/B policy consists in computing an appropriate target speed $v, V < v \leq V_L$, accelerate the vehicle to $v$ and then brake to full stop.
 To ensure collision freedom, the total travelled distance $A(V, v) + B(v)$ must be such that $A(V, v) + B(v) \leq F$.
 The maximal target speed is given by the following condition.
\begin{equation*}
 v_M = \max \{v ~|~ F \geq A(V,  v) + B(v) \}
\end{equation*}
 Such a speed exists as both acceleration and braking functions are monotonically increasing with respect to the target speed $v$.
 Notice that either $v_M \leq V_L$ and $F = A(V, v_M) + B(v_M)$ or $v_M = V_L$ and $F > A(V, v_M) + B(v_M)$.

As an example, for motion at constant acceleration and deceleration ($a$ and $b$, respectively),
 we have $A(V, v) =  v * (v - V)/a + (v - V)^2/ (2*a)$ and $B(v) = v^2/(2*b)$.
 Then the safety condition becomes $F \geq v * (v - V)/a + (v - V)^2/(2*a) + v^2/(2*b)$, from which we deduce $v \leq \sqrt{(2*a*b*F + b* V^{2})/(a+b)}$.
 Thus the maximal target speed $v_M = \sqrt{(2*a*b*F + b* V^{2})/(a+b)}$.
 As we require that $v \geq V$, we have $F \geq V^{2}/(2*b) = B(V)$ and thus the maximal target speed always exists.
 Let $v_F$ denote the speed reached by accelerating along distance $F(t)$, i.e., $v_{F}^{2} - V^{2} = 2*F(t)*a$,
 then the formula can be simplified as $v_M = v_{F}*\sqrt{b/(a+b)}$.

\begin{figure}[h]
\centering
\includegraphics[width=0.8\textwidth]{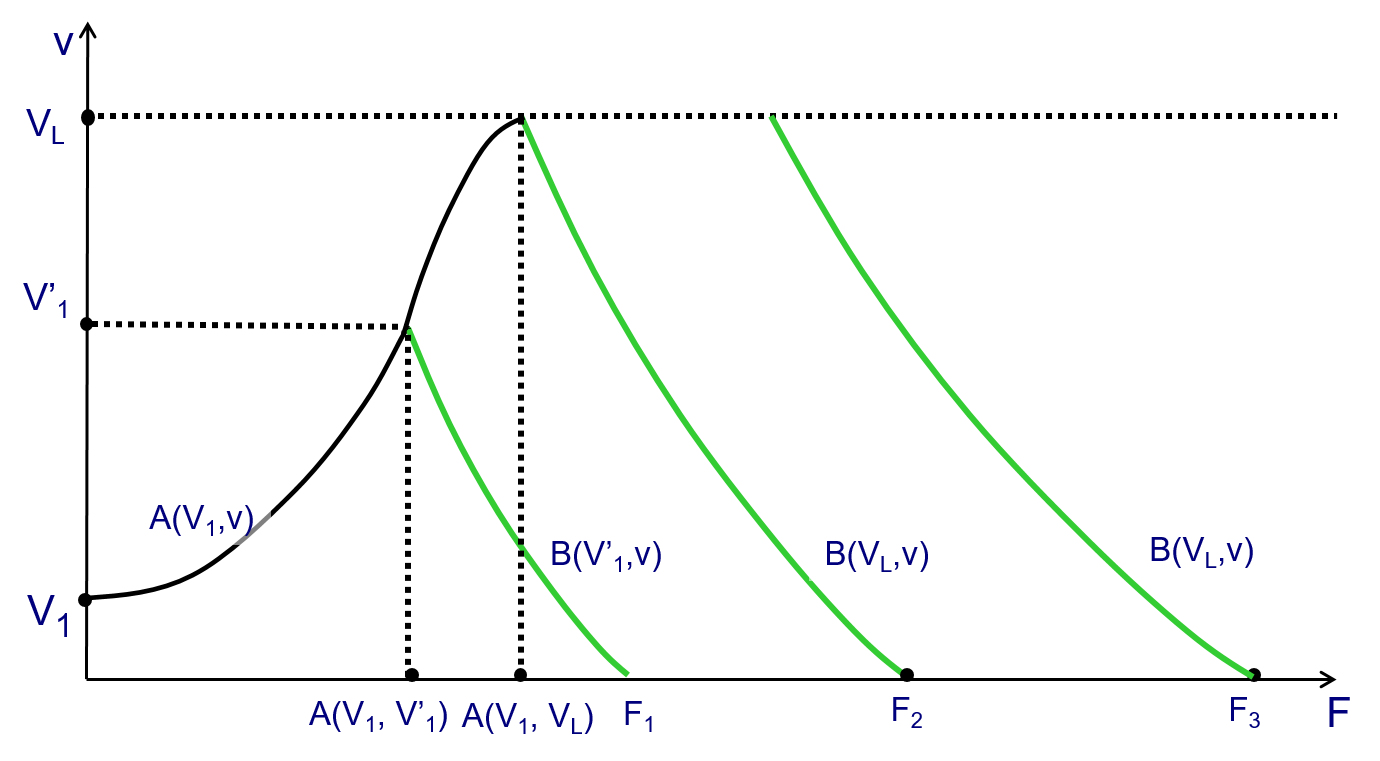}
\caption{The A/B control policy for different values of the free distance $F$ ahead}
\label{optimalspeed}
\end{figure}

Fig. \ref{optimalspeed} illustrates the A/B control policy where $F$ is the free distance ahead and $v$ is the speed of the controlled vehicle.
 The green curves illustrate braking phases and the black the accelerating phase from an initial speed $V_{1}$.
 For $F = F_1$, the maximal target speed $V'_{1}$ is less than the  limit speed $V_{L}$.
 The A/B policy consists in accelerating to $V'_{1}$, and then braking until the vehicle stops having travelled exactly distance $F_{1}$.
 If the free distance ahead is $F = F_{2}$, the maximal target speed will be the  limit speed $V_{L}$.
 The A/B policy will similarly accelerate first to the  limit speed and then brake to stop at $F_{2}$.
 Finally, if $F = F_{3} > F_{2}$, then after accelerating to the limit speed $V_{L}$,
 the vehicle will maintain constant speed $V_{L}$ for distance $F_{3} - A(V_1, V_L) - B(V_L)$ and then brake for the remaining distance to stop at $F_{3}$.

\begin{theorem} \label{theorem2}
If the speed $V$ of the vehicle is safe with respect to $F$, i.e.,$B(V) \leq F$, then the A/B policy is always safe and efficient for $F$.
\end{theorem}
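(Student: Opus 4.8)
The plan is to split the claim into its two components, safety and efficiency, and to view the A/B trajectory as the concatenation of up to three phases in the distance--speed plane: an accelerating phase from $V$ up to the peak speed $v_M$, an optional constant-speed phase at $v_M = V_L$ (which arises only in the case $v_M = V_L$ and $F > A(V,v_M)+B(v_M)$), and a braking phase from $v_M$ down to $0$. I would parametrise each phase by the distance $x \in [0,F]$ already travelled, so that the remaining free distance is $f = F - x$, and maintain throughout the single invariant $B(v) \leq f$, where $v$ is the current speed. This invariant says the vehicle can always brake to a full stop within the space still ahead, which is exactly the form of collision-freedom that safety requires; note that it forces $f \geq 0$ at all times, with $f=0$ only at the terminal stop.

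For safety I would verify the invariant phase by phase. In the accelerating phase the speed $v'$ is reached after travelling $A(V,v')$, so $f = F - A(V,v')$ and the invariant reduces to $A(V,v') + B(v') \leq F$; since both $A(V,\cdot)$ and $B(\cdot)$ are strictly increasing, this follows from $A(V,v') + B(v') \leq A(V,v_M) + B(v_M) \leq F$, the last inequality being the defining property of $v_M$. In the constant-speed phase the remaining distance decreases but stays $\geq B(V_L)$ by construction of the phase length, so the invariant holds. In the braking phase, additivity gives that after braking from $v_M$ to $v''$ the total travelled distance is $F - B(v'')$, whence $f = B(v'')$ and the invariant holds with equality, the vehicle stopping with $f = 0$ exactly at the obstacle. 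Conceptually, once the peak is reached the remainder is covered by the remark following Theorem~\ref{theorem1}: braking toward a fixed or forward-moving obstacle from a safe state preserves safety. The accelerating phase is the only part needing the separate monotonicity argument, since the hypothesis of Theorem~\ref{theorem1} does not hold while the speed is increasing.

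For efficiency I would show that the A/B speed profile is pointwise maximal among safe profiles, so that raising $v$ at any travelled distance $x$ breaks safety. The profile saturates, at each $x$, the binding of two envelopes: the acceleration envelope $A(V,v) = x$ (the fastest speed reachable from $V$ within distance $x$) during the accelerating phase, and the braking envelope $B(v) = F - x$ (the fastest speed from which one can still stop by $F$) during the braking phase, the two meeting at $v_M$ or clamping at $V_L$. On the braking envelope any increase of $v$ at position $x$ violates $B(v) \leq F - x$, so the vehicle can no longer stop within $F$ and collision follows. For the peak, if $v_M < V_L$ then $A(V,v_M)+B(v_M) = F$, so any target $v > v_M$ gives $A(V,v) + B(v) > F$ by strict monotonicity and the vehicle overruns $F$; if $v_M = V_L$ the peak already equals the limit speed and cannot be raised. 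On the acceleration envelope the speed is the maximum reachable from the initial state within distance $x$, so it cannot be increased within the model.

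The main obstacle I anticipate is the efficiency argument, and specifically making ``increasing the speed at any point'' precise: during braking the binding constraint is the stop-by-$F$ safety condition, a genuine safety violation, whereas during acceleration the binding constraint is reachability from the initial speed $V$, so the two phases must be argued on different grounds and then combined, with the peak $v_M$ and the $V_L$-clamp serving as the junction points. The safety half, by contrast, is essentially bookkeeping with additivity and strict monotonicity across the three phases.
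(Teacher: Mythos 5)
Your proof is correct, but the efficiency half follows a genuinely different route from the paper's. The paper proves efficiency by comparing travel times among three candidate policy classes (A/B, C/B, B/A) over the segment $D = F - B(V)$: since additivity forces every safe policy to reserve the final $B(V)$ for braking, the policies are compared on the time to traverse $D$, and an integral/average-speed argument ($D = v\, t_D = \int_0^{t_A} v'(t)\,dt + \int_0^{t_B} v''(t)\,dt > v\,(t_A+t_B)$) shows A/B beats C/B, which in turn beats B/A; all other policies are dismissed as ``combinations'' of these three. You instead prove pointwise maximality of the A/B speed profile in the distance--speed plane: at each travelled distance $x$ the profile saturates either the braking envelope $B(v) = F - x$ (where any increase violates the stop-by-$F$ condition) or the acceleration envelope $A(V,v) = x$ (where no higher speed is reachable), with $v_M$ or the $V_L$-clamp as the junction. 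Your version matches the paper's literal definition of efficiency (``increasing the speed value $v(t)$ enforced by the policy at any point would compromise safety'') more directly, and it yields optimality against \emph{all} safe profiles rather than only the three named classes; the one caveat, which you correctly flag, is that during the accelerating phase the binding constraint is reachability rather than safety, so that phase is argued on different grounds than the definition strictly calls for. Conversely, the paper's argument makes the time-minimality consequence (``maximize the average speed'') explicit, which yours obtains only implicitly via pointwise dominance. On the safety half you are aligned with, and more explicit than, the paper, which simply defers to the remark after Theorem~\ref{theorem1}: your inequality $A(V,v') + B(v') \leq A(V,v_M) + B(v_M) \leq F$ for the accelerating phase fills in a step the paper leaves implicit, since that remark only covers braking.
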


\begin{proof}
The safety proof is given by the arguments following Theorem 1.
 To prove efficiency, we consider three basic driving policies: the A/B policy, the B/A policy and the C/B policy.
 The other possible policies, such as accelerating, driving at constant speed, accelerating and then braking, can be obtained as combinations of the three basic ones.
 We show that the A/B policy yields the minimal travelling time.

We decompose the free distance ahead $F$ into two segments: one segment of length $D = F - B(V)$ and one segment of length $B(V)$.
 Due to the additivity property, the distance $B(V)$ is always required regardless of the applied polices in order to brake safely from speed $V$.
 So the policies may differ only in the time needed to travel distance $D$.
 In the A/B policy, the vehicle travels distance $D$ by first accelerating to the maximal target speed $v_{M}$ and then braking from $v_{M}$ to $V$.
 We denote by $t_A$ the time needed to accelerate from $V$ to $v_{M}$ and by $t_B$ the time needed to brake from $v_{M}$ to $V$.
 In the C/B policy, the vehicle first moves at constant speed $V$ for the distance $D$ and then brakes from $V$ for the remaining distance $B(V)$.
 We denote by $t_{D}$ the time needed to travel $D$ with constant speed $V$.
 We show that $t_{D}$ is greater than $t_A + t_B$.
 We denote the speed function during acceleration by $v'(t)$ and the speed function during deceleration by $v''(t)$.
 Then we have $v'(t) > v$ for $t_A > t > 0$ and $v''(t) > v$ for $t_B > t > 0$.
 Since $D = v * t_{D} = \int_{0}^{t_A} v'(t) dt + \int_{0}^{t_B} v''(t)dt > \int_{0}^{t_A} v dt + \int_{0}^{t_B} v dt = v*(t_A + t_B)$, we have $t_{D} > t_A + t_B$.
 Thus the A/B policy takes less time and it is more efficient than the C/B policy.

In the B/A policy, the vehicle first brakes and accelerates for the distance $D$ and then brakes for the remaining distance $B(V)$.
 We denote by $t'_{B} + t'_{A}$ the travelling time of $D$ by braking and accelerating.
 By applying a similar reasoning, we can show that $t'_{B} + t_{A} > t_{D}$.
 Thus the C/B policy is more efficient than the B/A policy.
 This concludes the proof.
\end{proof}

The above result implies that for the given free distance $F$,
 the A/B policy is the most efficient and that from the given initial speed there is a maximal speed that minimizes the travel time of $F$.

\section{Controller design for collision avoidance}
\label{sec:design}

\subsection{The control principle}

We study a control principle for collision avoidance based on the above results.
 We consider that the vehicle speed can change between a finite set of increasing levels $v_0, v_1, ..., v_n$,
 where $n$ is a constant, $v_0 = 0$ and $v_n$ equals to the  limit speed $v_{L}$.
 The triggering of acceleration and braking from one level to another is controlled according to the free distance ahead and based on bounds computed as follows, for each speed level $v_i, i\in [1, n]$,

 \begin{itemize}
   \item $B_{i} = B(v_i)$ is the minimal safe braking distance needed for the vehicle to fully stop from speed $v_i$;
   \item $D_{i} = A(v_{i-1}, v_{i}) + B(v_i)$ is the minimal safe distance needed for the vehicle
 to apply an A/B policy accelerating from speed $v_{i-1}$ to $v_i$ and then braking from $v_i$ to stop.
 \end{itemize}

We show that the following function specifies the highest safe speed level $v$ as a function of the current speed of the vehicle $V$ and the free space ahead $F$, 
 provided that their initial values $V_0$ and $F_0$ are such that $B(V_0) \leq F_0$.
\begin{equation*}
  v = Control(F,V)
\end{equation*}
\begin{equation*}
 v =
    \begin{cases}
    v_{i+1} & \text{when ~~ $V = v_i \wedge F = D_{i+1}$} \\
    v_{i-1} &  \text{when ~~ $V = v_i \wedge F = B_{i}$} \\
    v_{i} & \text{when ~~ $V = v_i \wedge D_{i+1} > F > B_{i}$}
    \end{cases}
\end{equation*}

\begin{figure}[h]
\centering
\includegraphics[width=0.8\textwidth]{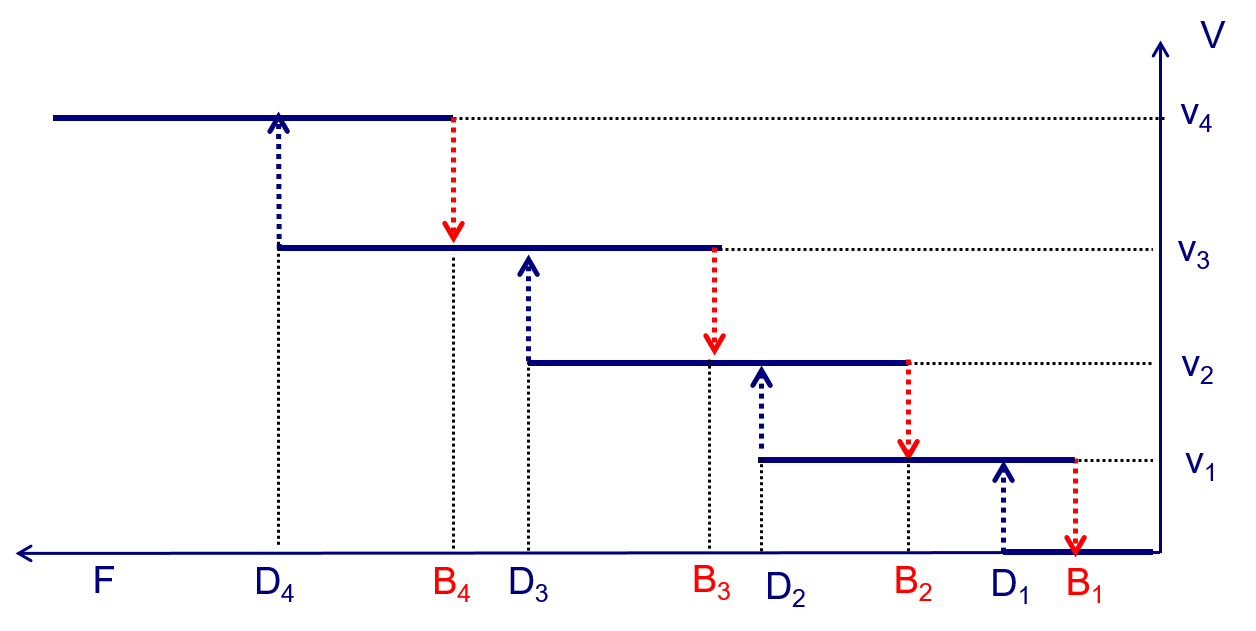}
\caption{Illustration of the collision avoidance principle for $n=4$}
\label{ledder}
\end{figure}

Note that this control principle is purely functional.
 It assumes that changes of the free distance ahead $F$ can be continuously monitored to instantaneously produce corresponding speed changes.

Fig.\ref{ledder} illustrates the principle for $n=4$ speed levels.
 As the value of $F$ increases, the speed of the vehicle switches between levels.
 Safety is preserved by construction.
 The vehicle can accelerate to a higher level, if it can safely and efficiently use the available distance by applying an A/B policy.
 It brakes to a lower level if the available distance reaches the bound for safe braking.

\begin{figure}
  \centering
  \includegraphics[width=0.9\textwidth]{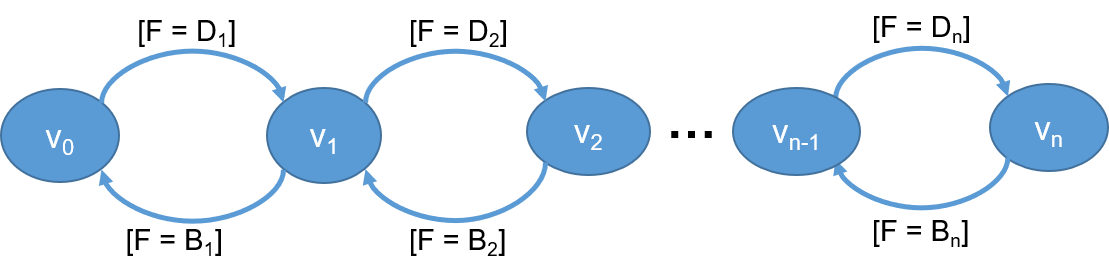}
  \caption{Automaton modelling the collision avoidance principle}
  \label{automaton}
\end{figure}

Fig.\ref{automaton} provides a scheme for the computation of $Control(F,V)$ in the form of a finite state automaton.
 The locations correspond to traveling at constant speeds $v_0, ..., v_n$.
 The transitions model instantaneous acceleration and braking steps 
 triggered by conditions involving the free distance $F$ and the precomputed bounds $B_{i}$ and $D_{i}$.
 If the control location is $v_{i}$ and the free distance ahead equals to the minimal safe acceleration distance (i.e., $F = D_{i+1}$),
 then the automaton moves to location $v_{i+1}$ after the speed is accelerated to $v_{i+1}$.
 If the free distance ahead reaches the minimal safe braking distance (i.e., $F = B_{i}$),
 then the automaton moves to location $v_{i-1}$ after the speed is decelerated to $v_{i-1}$.
 Recall that $B_{i} = B_{i-1} + B(v_i, v_{i-1})$.
 Thus, after braking to $v_{i-1}$ there is still enough space for safe braking.
 Note that checking point conditions makes sense because $F$ has no jumps and computation is instantaneous.
 If none of the triggering conditions holds, then the free distance ahead $F$ is such that $B_i < F < D_{i+1}$.
 The automaton stays at location $v_i$ and the speed remains unchanged.

Note that the automaton of Fig.\ref{automaton} cannot be implemented as a controller
 because we assume that $F$ is continuously observable and changes of the controlled speed are instantaneous.
 In the next section, we show how to design controllers by refining this automaton.

\begin{theorem}\label{theorem3}
The collision avoidance principle is safe. Moreover, its efficiency is strictly increasing for increasing number of speed levels $n$.
\end{theorem}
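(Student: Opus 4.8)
The plan is to prove the two claims separately. \emph{Safety} I would establish by maintaining the invariant $B(V) \leq F$ throughout any run of the automaton of Fig.~\ref{automaton} and then invoking Theorem~\ref{theorem1} to conclude collision freedom. The base case is the hypothesis $B(V_0) \leq F_0$. For the inductive step I would verify that each of the three branches of $Control(F,V)$ preserves the invariant. In the stay branch the speed is unchanged and the guard gives $F > B_i = B(V)$, so the invariant is immediate. In the accelerate branch the guard is $F = D_{i+1} = A(v_i,v_{i+1}) + B(v_{i+1})$; after accelerating to $v_{i+1}$ the travelled distance is $A(v_i,v_{i+1})$, so the new free distance is $F' = D_{i+1} - A(v_i,v_{i+1}) = B(v_{i+1}) = B(V')$, which satisfies $B(V') \leq F'$. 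In the brake branch the guard is $F = B_i$; by additivity $B_i = B(v_i,v_{i-1}) + B(v_{i-1})$, so after braking to $v_{i-1}$ the remaining distance is $F' = B_i - B(v_i,v_{i-1}) = B(v_{i-1}) = B(V')$, again preserving the invariant. Since $B(V) \leq F$ holds at every step and the obstacles never move in the opposite direction, the hypothesis of Theorem~\ref{theorem1} is met and the principle is safe.

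For the \emph{efficiency} claim, I would first characterise, for a given starting speed and free distance $F$, the maximal speed level the controller settles at as a staircase function $\sigma_n(F)$ determined by the thresholds $D_i$ and $B_i$; by Theorem~\ref{theorem2} this staircase lies below the continuous optimal target speed $v_M(F) = \max\{v \mid F \geq A(V,v) + B(v)\}$. I would then measure efficiency by the travelling time needed to cover a fixed distance, which for a monotone speed profile is a strictly decreasing functional of the profile. Refining the discretization from $n$ to $n+1$ levels by inserting a level $v'$ strictly between two consecutive levels $v_i$ and $v_{i+1}$ raises the staircase: for free distances with $A(v_i,v') + B(v') \leq F < D_{i+1}$ the controller can now reach $v' > v_i$ rather than remain at $v_i$, while $\sigma_{n+1}(F) = \sigma_n(F)$ elsewhere. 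Since this interval of $F$ is non-empty by the strict monotonicity of $A$ and $B$, and the attained speed is strictly larger there, the travelling time over the corresponding distance is strictly smaller, so efficiency strictly increases.

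The main obstacle will be the efficiency part, specifically pinning down a quantitative notion of efficiency and making the strict comparison rigorous. The binary definition used earlier (no safe speed increase is possible) does not by itself support a ``strictly increasing'' statement, so I would replace it with the travel-time functional and prove monotonicity of travel time under the pointwise ordering of speed profiles. Care is also needed in fixing what ``increasing $n$'' means: the clean argument is refinement by insertion of levels, and I would observe that the boundary levels $v_0 = 0$ and $v_n = v_L$ stay fixed, so the staircases converge from below to $v_M(F)$ as the mesh shrinks, recovering the continuous A/B optimum of Theorem~\ref{theorem2} in the limit.
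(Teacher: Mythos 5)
Your safety argument is essentially the paper's: both proceed by induction, maintaining the invariant $B(V)\leq F$ and checking that each of the three branches of $Control(F,V)$ preserves it via the additivity property; your explicit computation of the post-braking distance $F'=B_i-B(v_i,v_{i-1})=B(v_{i-1})$ is if anything slightly more complete than the paper's appeal to the inductive hypothesis in that case. The efficiency halves differ more. The paper stays with its qualitative definition of efficiency: it observes that for a free distance $F_{d1}$ with continuous optimum $V_{d1}$ the controller reaches the largest level $V_i\leq V_{d1}$, bounds the loss $V_{d1}-V_i$ by the maximal gap $\max_i(V_{i+1}-V_i)$ between consecutive levels, and concludes that efficiency increases with $n$ (treating the case $F>F_n$ separately, where the limit speed is the only constraint and there is no loss). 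You instead reformulate efficiency as a travel-time functional, describe the controller's choice as a staircase $\sigma_n(F)$ lying below $v_M(F)$, and obtain strictness by inserting a level, which raises the staircase on a non-empty interval of $F$. Your version buys rigour precisely where the paper is loose: literally `increasing $n$' does not force the maximal inter-level gap to shrink, since only $v_0=0$ and $v_n=v_L$ are pinned and one can add levels while widening a gap elsewhere, so the theorem's `strictly increasing' claim really only holds under the refinement-by-insertion (or uniform-mesh) reading you adopt; the paper's version buys brevity. Both rest on the same underlying observation, visible in Fig.~\ref{optimalspeed}, that the discrete levels under-approximate the continuous A/B optimum of Theorem~\ref{theorem2} from below.
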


\begin{proof}
Safety can be proved by induction on the number of speed levels.
 First, we prove that the transition to  $ v_1$ is safe.
 If $v=0$ and the condition $F = D_1$ holds, speed can change to $v_1$.
 The transition to $v_1$  needs distance $A(v_0, v_1)$.
 When this speed is reached, the distance $F'$ will be such that $F' \geq (F - A(v_0, v_1)) = B(v_1)$, since $D_1 = A(v_0, v_1) + B(v_1)$.
 Thus the vehicle is still safe at $v_1$ because the remaining distance is greater than the minimal safe braking distance.

Assuming safety for $v = v_i$, we prove safety for $v = v_{i+1}$.
 Safety for $v = v_i$ means that $F \geq B_i$ and it will remain safe as long as the speed level does not change.
 We distinguish two cases.
 If $F \geq D_{i+1} = A(v_{i}, v_{i+1}) + B(v_{i+1})$, then we can accelerate to speed $v_{i+1}$
 and the free headway distance will be $F' \geq (F - A(v_{i}, v_{i+1})) = B(v_{i+1})$,
 which implies that the safety condition still holds for $v_{i+1}$.
 If $B_{i+1} < F < D_{i+1}$, then the speed remains unchanged and there is enough distance to brake for $v_{i+1}$.
 If $F = B_{i+1}$, then the vehicle will brake to the lower speed $v_{i}$, which is safe by assumption.
 Thus it is also safe for $v = v_{i+1}$.

Note that as speed can be enforced to discrete levels, efficiency is achieved only when $F = D_i$ for some $i$;
 otherwise, the highest safe speed level is chosen.
 Let $V_{init}$ be the speed of the vehicle
 and $F_{i}  = A(V_{init}, V_i)+ B(V_i)$ be the distance needed for the application of an A/B policy from  $V_{init}$ to $V_i$.
 Then we consider two cases:

\begin{itemize}
  \item Either $F = F_{d1}$ such that  $B(V_{init}) \leq F_{d1} \leq F_{n}$ in which case
 by construction there exists some $v_i$ such that  $v_{i} \leq V_{d1} \leq v_{i+1}$ as shown in Fig.\ref{optimalspeed},
 where $V_{d1}$ is the maximal speed such that $F_{d1} = A(V_{init}, V_{d1})+ B(V_{d1})$.
 In that case the controller will apply the best A/B policy to reach from $V_{init}$ the speed level $V_i$.
 The loss in efficiency $V_{d1} - V_i$ is determined by the max of the difference $V_{i+1} - V_{i}$ for $i \in [1, n]$.
 Thus, for increasing number of speed levels, the efficiency increases.
  \item  Or $F = F_{d2}$ such that $F_{n} < F_{d2}$.
 In that case the controller will accelerate to the allowed  limit speed $v_n$ and then will keep the speed constant for distance $F_C = F_{d2} - F_{n}$.
 Then if the obstacle ahead is fixed, it will have to brake for distance $B(v_n)$.
 In that case there is no loss of efficiency as the limitation comes from the limit speed of the vehicle.
\end{itemize}

 This concludes the proof.
\end{proof}

As explained, computing the exact value of the optimal speed for a given distance may be costly.
 Considering discrete speed levels allows pre-computing for each level both the minimal safe braking distance and the minimal safe accelerating distance between levels.
 In that manner, we avoid the computational complexity of adjusting in real time the vehicle speed.

\subsection{Controller design}

We propose two controllers applying the presented collision avoidance principle.
 The first controller is synchronous driven by periodic updates of the free distance variable $F$ for an adequately chosen period.
 The second controller is asynchronous in the sense that the free distance variable $F$ is updated sporadically.

\subsubsection{Synchronous controller}

The controller interacts with its controlled environment (the vehicle) through input and output events as shown in Fig.\ref{controller1-box}.
 The output $s$ is a state variable indicating the currently applied command (i.e., accelerating, braking or constant speed).
 The input event $UpdateF$ signals the periodic measurement $F'$ of the free distance with period $T$,
 while input events $ca$ and $cb$ signal the completion of the accelerating and braking command respectively.
 Initially, the speed $v$ of the vehicle is set to a level $v_i$ that is safe with respect to the initial distance $F$ (i.e., $F \geq B(v_i)$).

\begin{figure*}[!htb]
  \centering
  \includegraphics[width=\textwidth]{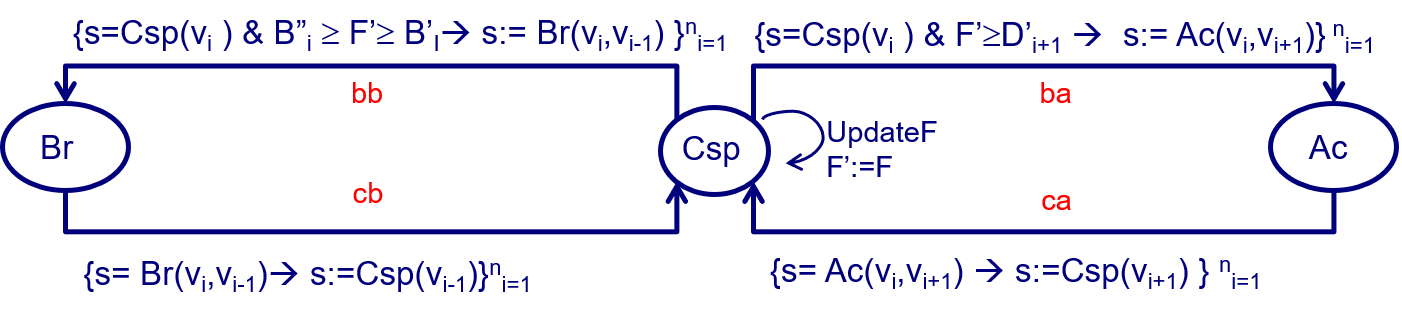}
  \caption{Extended automaton modelling the synchronous controller}
  \label{controller1}
\end{figure*}

 \begin{figure}[!htb]
  \centering
  \includegraphics[width=0.8\textwidth]{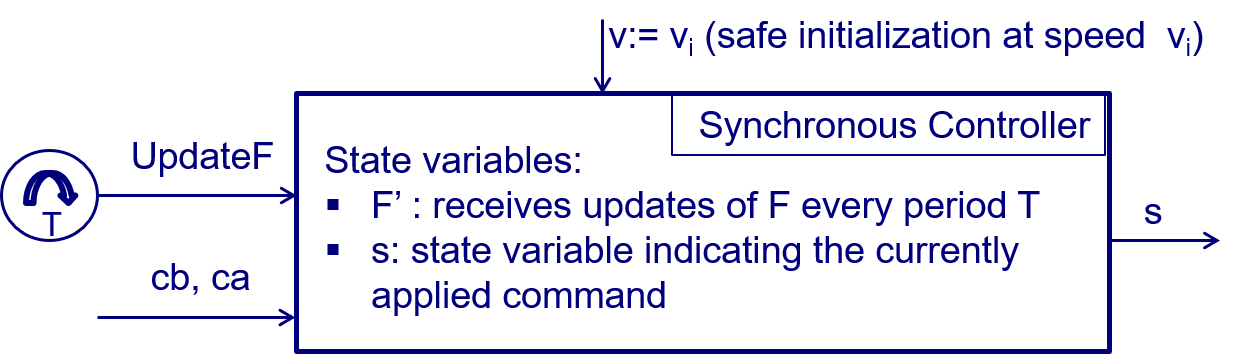}
  \caption{Inputs and outputs of the synchronous controller}
  \label{controller1-box}
\end{figure}

The controller is a refinement of the ideal controller where we assumed that speed changes were instantaneous. 
 It is described by the following set of guarded commands and also depicted as an extended automaton for the sake of clarity in Fig.\ref{controller1}.
\begin{equation*}
	\begin{split}
        do      & \\
         \Box ~  & \exists i \in [1,n]. s = Csp(v_{i}) \wedge  F' \geq D'_{i+1} \\
                 & ~~~~~~~         \rightarrow  s:= Ac(v_i, v_{i+1}); ~ba \\
         \Box ~  & \exists i \in [1,n]. s = Csp(v_{i}) \wedge  B''_{i} \geq F' \geq B'_{i} \\
                 & ~~~~~~~ \rightarrow s := Br(v_{i}, v_{i-1}); ~bb \\
         \Box ~  & \exists i \in [1,n]. ca \wedge s = Ac(v_i, v_{i+1}) \rightarrow  s:= Csp(v_{i+1}) \\
         \Box ~  & \exists i \in [1,n]. cb \wedge  s = Br(v_i, v_{i-1}) \rightarrow  s:= Csp(v_{i-1}) \\
         \Box ~  &  UpdateF  \rightarrow  F':= F \\
        od      & 
 	\end{split}
\end{equation*}
For guarded commands we adopt the usual semantics: 
 whenever the condition on the left hand side holds, the actions on the right hand side are executed. 
 Note that the input events appear as conditions while the output event appear as actions. 
 The variable $s$ keeps track of the kinematic state of the vehicle 
 that is abstracted by the control states \textbf{Ac} (accelerating), \textbf{Br} (braking) and \textbf{Csp} (moving with constant speed).

We denote by $Ac(v_i, v_{i+1})$, $Br(v_i, v_{i-1})$ and $Csp(v_i)$ the commands of accelerating from speed $v_i$ to $v_{i+1}$, 
 braking from $v_i$ to $v_{i-1}$ and moving with speed $v_i$, respectively.
 When the vehicle is moving with constant speed, transition \textbf{UpdateF} is triggered periodically to receive the most recent measurement of $F$.
 Once the triggering condition of accelerating (braking) is met, transition \textbf{ba} (\textbf{bb}) is taken to initiate the command 
 and move to location \textbf{Ac} (\textbf{Br}) waiting for its completion.

We do not make any assumption about the time spent at locations \textbf{Ac} and \textbf{Br}.
 We simply assume that the distances needed for accelerating and braking are $A(v_{i-1},v_{i})$ and $B(v_{i}, v_{i-1})$, respectively.
 We explain below how the guards of the controllable transtions $bb$ and $ba$ are computed.

We estimate the maximal safe approximations of the triggering conditions $F \geq D_i$ and $F = B_i$  of the ideal controller in terms of $F'$, the most recently updated value of $F$.
 When the vehicle moves at speed $v_i$,
 the variables $F$ and $F'$ satisfy a relation of the form $F = F' - k_{i}(t)$,
 where $k_{i}(t) = v_{i} * (t ~ mod ~ T)$.
 That is $k_{i}(t) = 0$ when $F$ is updated and $k_{i}(t) < v_{i} * T$.
 We assume that $T$ is small enough so that $D_{i} - v_{i} * T \geq B_{i}$, 
 that is, we do not miss the braking threshold value $B_{i}$ in a period.
 This is reasonable given that in practice the updating period $T$ is usually less than 50 milliseconds.
 Notice that the minimal value of $F$ will be reached for $F = F' - v_n * T$.
 Thus, it is enough to require that $F' \geq D_i + v_n * T$ holds for accelerating 
 and that $B_i + 2 * v_n * T \geq F' \geq B_i + v_n * T$ holds for braking.
 So we adjust the triggering bound for accelerating to $D_i' = D_i+ v_n * T$ and the least and upper bounds of the interval triggering a braking to $B_i' = B_i + v_n * T, B_i'' = B_i + 2* v_n * T$.

\subsubsection{Asynchronous controller}

Inputs and outputs of the asynchronous controller shown in Fig.\ref{controller2-box} differ in
 that the input event $UpdateF$ receiving the measurement $F'$ of the free distance occurs sporadically.
 Furthermore, this controller needs an internal clock event $tick$ with period $\triangle t$ to estimate the vehicle's position.

 \begin{figure}[!htb]
  \centering
  \includegraphics[width=0.8\textwidth]{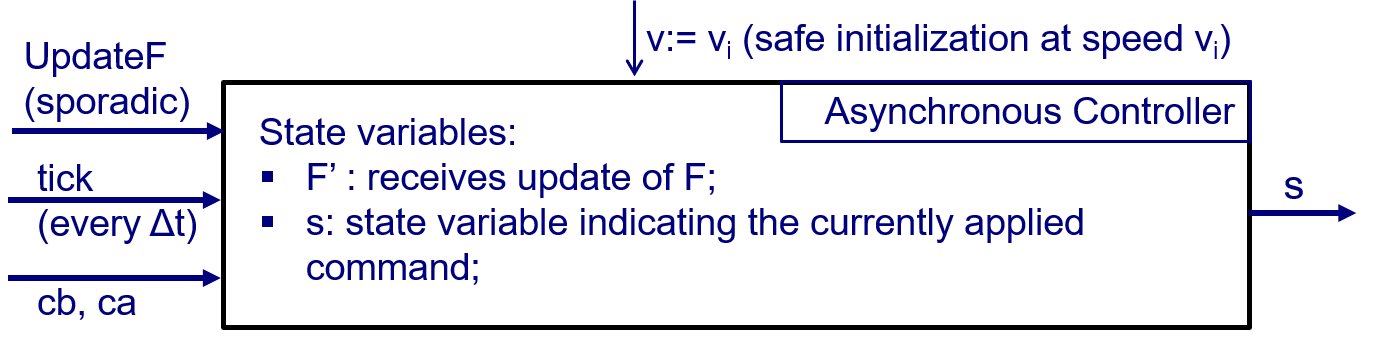}
  \caption{Inputs and outputs of the asynchronous controller}
  \label{controller2-box}
  \vspace{-0.2in}
\end{figure}

\begin{figure*}[!htb]
  \centering
  \includegraphics[width=\textwidth]{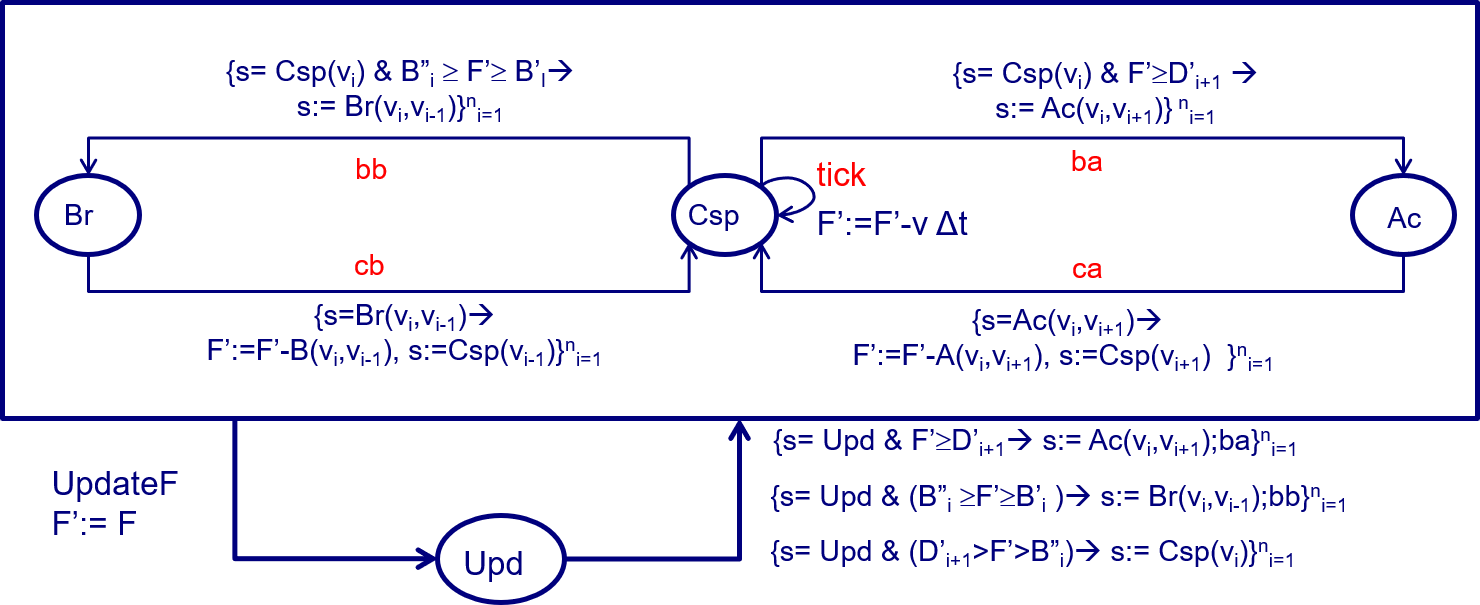}
  \caption{Extended automaton modelling the asynchronous controller}
  \label{controller2}
\end{figure*}

The asynchronous controller differs from the synchronous controller in the way of estimating the values of the free distance $F$.
 Its is described by the following guarded commands and also depicted as an extended automaton in Fig.\ref{controller2}.
\begin{equation*}
	\begin{split}
        do      & \\
         \Box ~  & \exists i \in [1,n]. (s = Csp(v_{i}) \vee s = Upd) \wedge  F' \geq D'_{i+1} \\
                 & ~~~~~ \rightarrow  s:= Ac(v_i, v_{i+1}); ~ba \\
         \Box ~  & \exists i \in [1,n]. (s = Csp(v_{i}) \vee s = Upd) \wedge  B''_{i} \geq F' \geq B'_{i} \\
                 & ~~~~~ \rightarrow s := Br(v_{i}, v_{i-1}); ~bb \\
         \Box ~  & \exists i \in [1,n]. s = Upd \wedge D'_{i+1} > F' > B''_{i} \rightarrow s := Csp(v_{i}) \\
         \Box ~  & \exists i \in [1,n]. ca \wedge  s = Ac(v_i, v_{i+1}) \\
                 & ~~~~~ \rightarrow  F':=F' - A(v_{i}, v_{i+1}) ; s:= Csp(v_{i+1}) \\
         \Box ~  & \exists i \in [1,n]. cb \wedge  s = Br(v_i, v_{i-1}) \\
                 & ~~~~~ \rightarrow  F':=F' - B(v_{i}, v_{i-1}) ; s:= Csp(v_{i-1}) \\
         \Box ~  & \exists i \in [1,n]. tick \rightarrow F' := F' - v_{i} * \triangle t \\
         \Box ~  & UpdateF \rightarrow  F':= F; s := Upd  \\
        od      & 
 	\end{split}
\end{equation*}
As previously we adopt similar notations. 
 The variable $s$ keeps track of the kinematic state of the vehicle 
 that is abstracted by the control states of the automaton \textbf{Ac} (accelerating), \textbf{Br} (braking) and \textbf{Csp} (moving with constant speed).

As the updates of $F$ are sporadic, 
 we use a local variable $F'$ to keep track of the possible changes of $F$ since its latest update as follows.
 When the vehicle completes an accelerating or braking step (i.e., when the completion transitions $ca$ or $cb$ occur),
 $F'$ is updated by $F'- A(v_i, v_{i+1})$ and $F'- B(v_i, v_{i-1})$, respectively.
 When the vehicle is moving with constant speed $v$, 
 $F'$ is updated by $F'- v * \triangle t$ for every time period $\triangle t$,
 where $\triangle t$ is a time constant such that the upper bound of the uncertainty $\epsilon = v_{n} * \triangle t$ in the estimation of $F$ remains small.

When an \textbf{UpdateF} occurs, from any state the controller moves to an update state \textbf{Upd}.
 This is represented by grouping the three locations of the automaton into a macro state with an outgoing transition to location \textbf{Upd}.
 Then without delay from location \textbf{Upd}, 
 the controller compares $F'$ to the corresponding bounds and moves to a safe target location accordingly.

As for the synchronous controller, 
 the upper and lower bounds of the triggering conditons of \textbf{bb} and \textbf{ba} are modified so as to take into account the uncertainty $\epsilon$ in the computation of $F$.
 In fact if $x$ is the remaining distance to travel since the last update of $F$, we have $x \geq F' \geq x - \epsilon $. 
 So the condition for accelerating to speed $v_{i+1}$ from $v_i$ becomes $F' \geq D'_{i+1}$,
 where $D'_{i+1} = D_{i+1} + \epsilon $ and the condition for braking from $v_i$ becomes $B''_i \geq F' \geq B'_i$ where $B''_i = B_i + 2*\epsilon$ and $B'_i = B_i + \epsilon$.  
 The safety argument still holds for this asynchronous controller because it simply applies after each update of $F$ an A/B policy for the considered speed levels.

\begin{theorem} \label{theorem4}
Both the synchronous and the asynchronous controller yield safe control policies for collision avoidance.
\end{theorem}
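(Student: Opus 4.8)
The plan is to reduce the claim to the already-established safety of the ideal control principle (Theorem~\ref{theorem3}). Both controllers are refinements of the automaton of Fig.~\ref{automaton}: they fire the \emph{same} acceleration, braking and constant-speed decisions, but they take those decisions on the basis of the locally maintained estimate $F'$ rather than of the continuously observable true free distance $F$, and they let each command consume a finite rather than an instantaneous distance. The invariant I would propagate is the single safety condition behind Theorem~\ref{theorem1}, namely that at every instant the true free distance dominates the braking distance of the current speed, $F \ge B(v)$; once this is shown to be maintained, collision freedom follows exactly as in the discussion after Theorem~\ref{theorem1}. As there, it suffices to treat the worst case in which $F$ shrinks as fast as the assumption allows (a fixed obstacle), since for obstacles moving in the same direction $F$ shrinks more slowly and the invariant is preserved a fortiori.

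First I would make precise the relation between the estimate $F'$ and the truth $F$ for each controller. For the synchronous controller, between two \textbf{UpdateF} events the vehicle travels at a constant level $v_i$, so $F = F' - k_i(t)$ with $0 \le k_i(t) < v_i T \le v_n T$; hence the true free distance lies within a window of width $v_n T$ below the last sampled value. For the asynchronous controller the analogous fact is that $F'$, maintained through the \textbf{tick}, $ca$ and $cb$ updates, tracks $F$ up to the tolerance $\epsilon = v_n \triangle t$. In both cases the point is the same: whenever the controller reads $F'$, the quantity it actually cares about, $F$, may differ from $F'$ by at most one sampling step, and never optimistically by more than that step.

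Next I would check, command by command, that each guarded transition preserves the invariant $F \ge B(v)$, the inflated thresholds being designed exactly for this. For acceleration, the guard $F' \ge D'_{i+1} = D_{i+1} + v_n T$ (resp. $D_{i+1}+\epsilon$) together with the estimate bound forces $F \ge D_{i+1} = A(v_i,v_{i+1}) + B(v_{i+1})$; since the command consumes at most the distance $A(v_i,v_{i+1})$, upon completion the true free distance is still at least $B(v_{i+1})$, which re-establishes the invariant at the new level. For braking, the window $[B'_i,B''_i] = [B_i + v_n T,\, B_i + 2 v_n T]$ (resp. $[B_i+\epsilon,\,B_i+2\epsilon]$) is triggered while the true $F$ is still at least $B_i$, so there is room to decelerate to $v_{i-1}$, which is safe by the inductive hypothesis; the assumption $D_i - v_n T \ge B_i$ guarantees that the braking window is not skipped between two readings. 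The constant-speed command changes neither the speed nor the ordering of $F$ and $B(v_i)$, so it trivially preserves the invariant.

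The step I expect to be the main obstacle is the asynchronous controller's bookkeeping, because there $F'$ is reconstructed rather than re-sampled: an \textbf{UpdateF} may arrive in the middle of an accelerating or braking phase, and the effect of the interleaved \textbf{tick}, $ca$ and $cb$ updates must be shown never to let $F'$ drift optimistically away from $F$ by more than $\epsilon$. I would discharge this by an induction over the event sequence: after each \textbf{UpdateF} the equality $F'=F$ is restored; each \textbf{tick} subtracts the distance $v_i\triangle t$ travelled in one period; and each $ca$/$cb$ subtracts the \emph{exact} command distance $A(v_i,v_{i+1})$/$B(v_i,v_{i-1})$, which is well defined regardless of how the command was executed by the additivity property. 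Hence the accumulated discrepancy stays within the tolerance that the inflated bounds were chosen to absorb. Granting this estimation invariant, the asynchronous controller, like the synchronous one, is seen to apply after every update a safe A/B policy for the current speed levels, and safety then follows from Theorem~\ref{theorem3}.
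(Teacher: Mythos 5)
Your proposal is correct and follows essentially the same route as the paper: the synchronous case is reduced to the inductive safety argument of Theorem~\ref{theorem3} with the inflated thresholds absorbing the sampling error $v_n T$, and the asynchronous case is handled by induction on the updates of $F$, showing that the bookkeeping of $F'$ via \textbf{tick}, $ca$ and $cb$ keeps the estimate within the tolerance $\epsilon$ that the adjusted bounds $D'_{i}$, $B'_{i}$, $B''_{i}$ were designed to cover. Your version is in fact more explicit than the paper's own (deliberately terse) proof, particularly in isolating the invariant $F \ge B(v)$ and the estimation relation between $F'$ and $F$ as the two facts to propagate.
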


\begin{proof}
The safety proof for the synchronous controller follows the same reasoning as in the previous theorem .

The safety proof for the asynchronous controller is by induction on the updates of the free distance $F$.
 Assume that an update starts and the vehicle is at speed level $v_i$ that is safe for $F'=F$. 
 Then until the next update occurs, the controller will keep track of the free distance by updating $F'$ in every $\triangle t$ time: 
 $F' := F' - v_i * \triangle t$ and in that manner at any time $F'$ will be such that $F' + \epsilon \geq x$,
 where $x$ is the remaining distance to safely travel since the last update of $F$.
 The controller applies an A/B policy, which is safe because the conditions for accelerating and braking have been modified to take into account the maximal deviation $\epsilon$.  
 So, the vehicle will move safely until the next update or stop after $F'$ reaches a value $B''_{1} \geq F' \geq B'_{1}$ and trigger the braking command $Br(v_1, 0)$. 
\end{proof}

Following the same reasoning as in the previous theorem, 
 we can deduce that the efficiency of the two controllers strictly increases for increasing number of speed levels $n$.
 Furthermore, the efficiency depends on how frequently $F$ is updated as the 
 accelerating and braking conditions take into account the uncertainty about the values of $F$.
 Thus, the controllers may not be able to fully utilize actually available free distance.
 While as explained for synchronous controller, 
 the loss in efficiency depends on the  value of $v_{n} * T$,
 for the asynchronous one it depends on the maximal time difference between two successive updates of $F$.

\section{Experimental evaluations}
\label{impl-expr}

We have implemented both the synchronous and the asynchronous controller in the open-source autonomous driving simulator Carla \cite{Dosovitskiy17}.
 In the experiments, we consider scenarios where the controlled vehicle is driving towards a moving vehicle ahead as shown in Fig.\ref{carla}.
 The speed of the front vehicle is described by the periodic function $v_{f}(t) = v_{f0} + v_{f0} * sin(\omega * t)$,
 where $\omega = 2*\pi / T_{f}$, $T_{f}$ is the period of this speed function and $v_{f0}$ is a constant.
 We choose $v_{f0} = 14 ~ m/s$, and thus the speed of the front vehicle changes in the interval $[0, 28 ~ m/s]$ (i.e., $[0, 100.8~ km/h]$).
 We set the limit speed of the controlled vehicle to be $32 ~ m/s$ (i.e., $115.2 ~ km/h$).
 The initial distance between the two vehicles is $F(0) = 5 m$ and the initial speed of the controlled vehicle is 0.
 Thus, the controlled vehicle is initially at a safe state.
 The accelerating and braking rates of the two vehicles are both constant $a=b=2 ~ m/s^{2}$.

\begin{figure}[h]
\centering
\includegraphics[scale = 0.2]{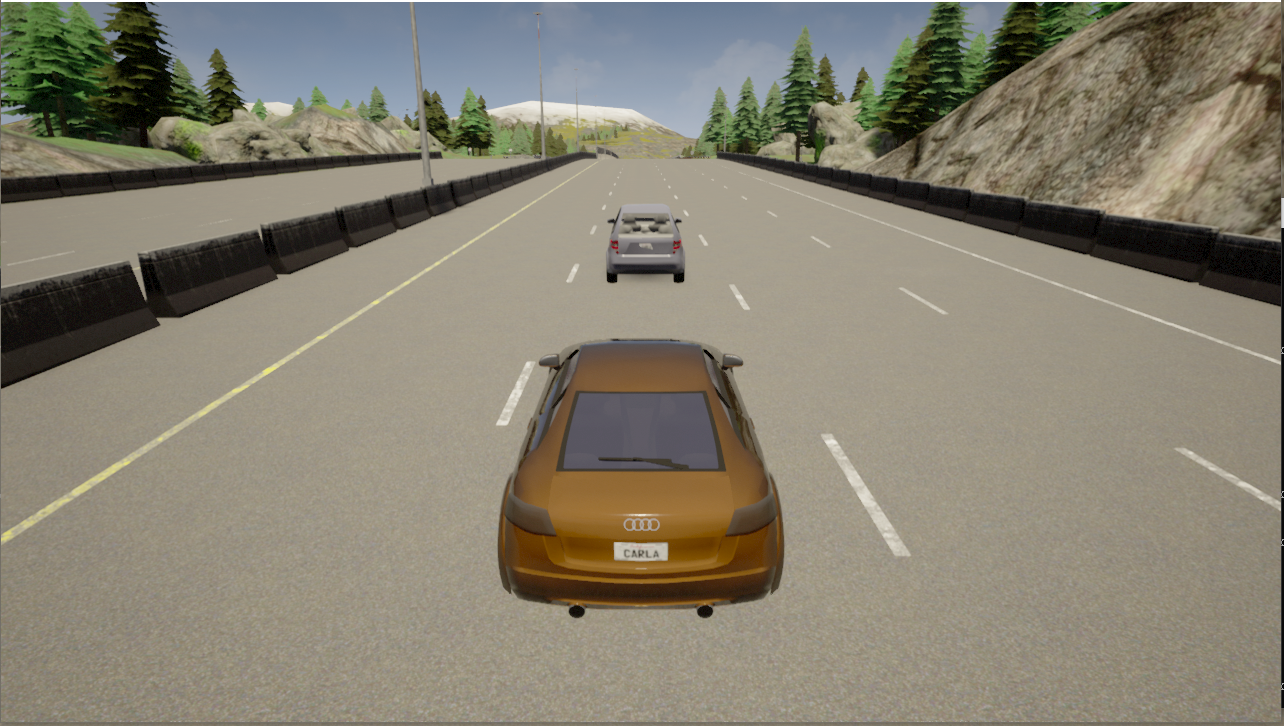}
\caption{Simulation environment in Carla}
\label{carla}
\end{figure}

In order to evaluate the performance and the quality of the controllers,
 we measure both the speed changes of the controlled vehicle 
 and the relative distance between the two vehicles, reflecting the occupancy of the road.
 The smaller the distance is, the higher the road occupancy is.
 We perform experimental evaluations in two settings.

 \begin{itemize}
   \item In \textbf{Setting 1}, 
   we consider that the free distance ahead is equal to the relative distance between the two vehicles, that is we ignore the speed of the front vehicle. 
   This corresponds to a strict safety policy that avoids collision even when the front vehicle suddenly stops e.g. in case of accident.
   \item In \textbf{Setting 2}, 
   we consider that the free distance is the relative distance increased by the braking distance of the front vehicle.
 \end{itemize}

In both settings,  we perform the evaluations with respect to three parameters:
 the period $T_{f}$ of $v_{f}$, the number of speed levels $n$ of the controlled vehicle and the period $T$ of sensing the free distance.
 For experimental purposes,
 the safe accelerating and braking distances for eight speed levels $v[8] = \{4, 8, 12, 16, 20, 24, 28, 32\}$
 are pre-computed as shown in Table.\ref{tab-distances} and configured in the implementations.
 The distances are obtained for constant accelerating and braking rates $a = b =2 ~ m/s^{2}$.

\begin{table}
\centering
\caption{Safe accelerating and braking distances for the eight speed levels}
\begin{tabular}{|p{2cm}|p{2.5cm}|p{2.5cm}|p{2.5cm}|}
  \hline
  speed level ($m/s$)  & Accelerating distance ($m$) & Braking distance ($m$) & Distance for A/B policy  \\ \hline
  $v_{1}$ = 4   & $A(v_0, v_1)$ = 4    & $B(v_{1})$ =  4  & $D(v_0, v_1)$ = 8   \\ \hline
  $v_{2}$ = 8   & $A(v_1, v_2)$ = 12   & $B(v_{2})$ = 16  & $D(v_1, v_2)$ = 28  \\ \hline
  $v_{3}$ = 12  & $A(v_2, v_3)$ = 20   & $B(v_{3})$ = 36  & $D(v_2, v_3)$ = 56  \\ \hline
  $v_{4}$ = 16  & $A(v_3, v_4)$ = 28   & $B(v_{4})$ = 64  & $D(v_3, v_4)$ = 92  \\ \hline
  $v_{5}$ = 20  & $A(v_4, v_5)$ = 36  & $B(v_{5})$ = 100 & $D(v_4, v_5)$ = 136  \\ \hline
  $v_{6}$ = 24  & $A(v_5, v_6)$ = 44  & $B(v_{6})$ = 144 & $D(v_5, v_6)$ = 188  \\ \hline
  $v_{7}$ = 28  & $A(v_6, v_7)$ = 52  & $B(v_{7})$ = 196 & $D(v_6, v_7)$ = 248  \\ \hline
  $v_{8}$ = 32  & $A(v_7, v_8)$ = 60  & $B(v_{8})$ = 256 & $D(v_7, v_8)$ = 316  \\ \hline
\end{tabular}
\label{tab-distances}
\end{table}

\begin{figure}[htbp]
\centering
\begin{minipage}[t]{0.48\textwidth}
\centering
\includegraphics[width = \textwidth]{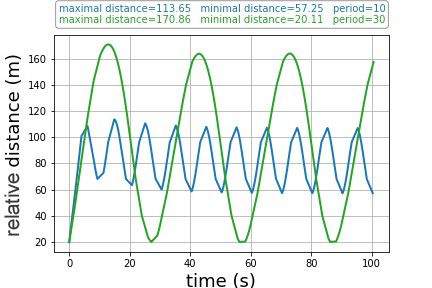}
\end{minipage}
\begin{minipage}[t]{0.48\textwidth}
\centering
\includegraphics[width = \textwidth]{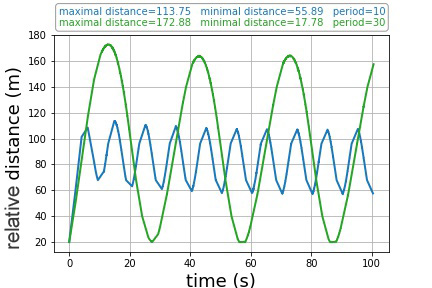}
\end{minipage}
\begin{minipage}[t]{0.48\textwidth}
\centering
\includegraphics[width = \textwidth]{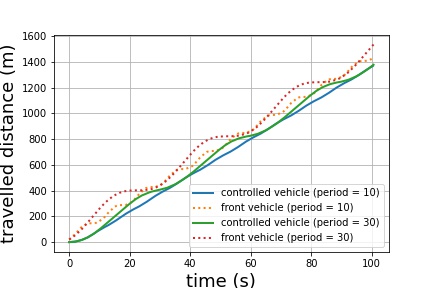}
\end{minipage}
\begin{minipage}[t]{0.48\textwidth}
\centering
\includegraphics[width = \textwidth]{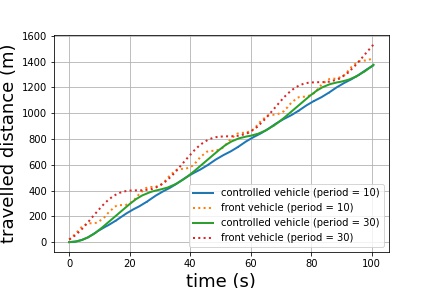}
\end{minipage}
\begin{minipage}[t]{0.48\textwidth}
\centering
\includegraphics[width = \textwidth]{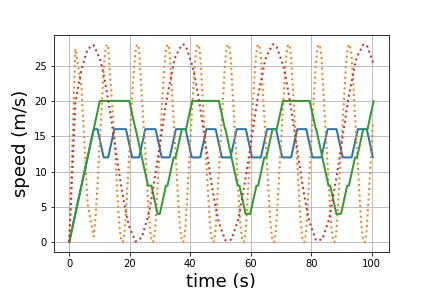}
\end{minipage}
\begin{minipage}[t]{0.48\textwidth}
\centering
\includegraphics[width = \textwidth]{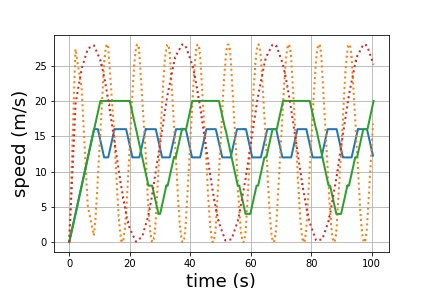}
\end{minipage}
\caption{Simulation results for the synchronous (left part) and the asynchronous controller (right part) 
 for $T_f \in \{10 ~ s, 30 ~s\}$ and  $n=8$ speed levels and sensing period $T=0.02~s$.}
\label{sync-res-1}
\end{figure}

\begin{figure}
\centering
\begin{minipage}[t]{0.49\textwidth}
\centering
\includegraphics[width = \textwidth]{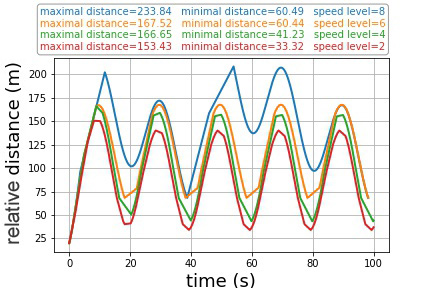}
\end{minipage}
\begin{minipage}[t]{0.49\textwidth}
\centering
\includegraphics[width = \textwidth]{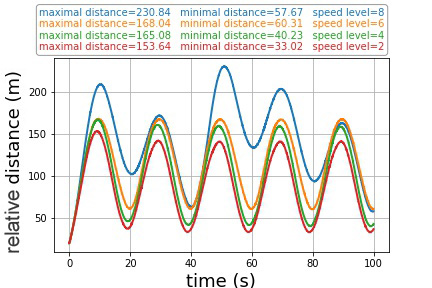}
\end{minipage}
\begin{minipage}[t]{0.49\textwidth}
\centering
\includegraphics[width = \textwidth]{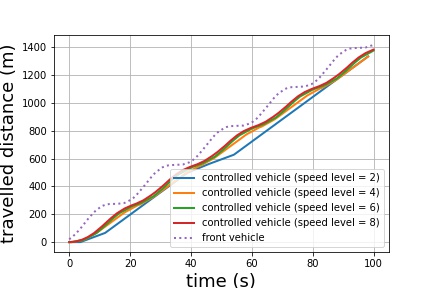}
\end{minipage}
\begin{minipage}[t]{0.49\textwidth}
\centering
\includegraphics[width = \textwidth]{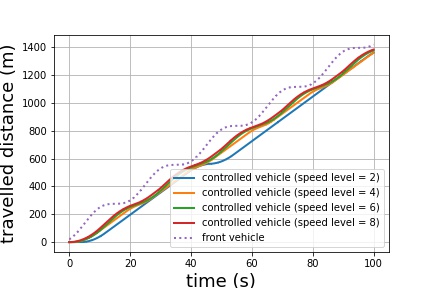}
\end{minipage}
\begin{minipage}[t]{0.49\textwidth}
\centering
\includegraphics[width = \textwidth]{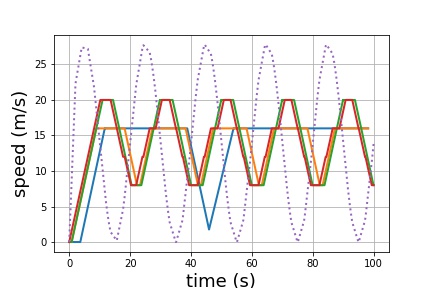}
\end{minipage}
\begin{minipage}[t]{0.49\textwidth}
\centering
\includegraphics[width = \textwidth]{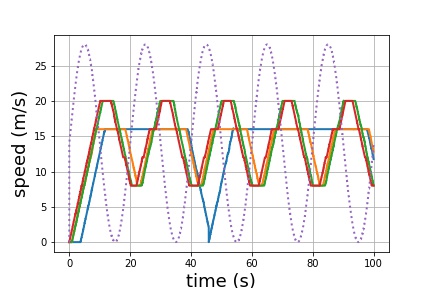}
\end{minipage}
 \caption{Simulation results for the synchronous (left part) and the asynchronous (right part) controller
 for four different speed levels ($n \in \{2,4,6,8\}$) and $T_{f} = 20 ~ s$, $T = 0.02~s$. }
\label{sync-res-2}
\end{figure}

\begin{figure}
\centering
\begin{minipage}[t]{0.49\textwidth}
\centering
\includegraphics[width = \textwidth]{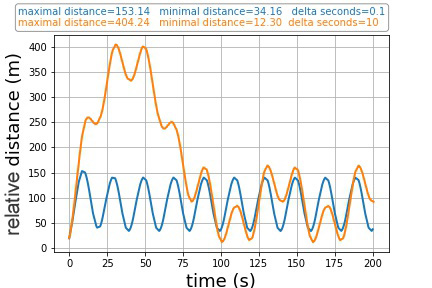}
\end{minipage}
\begin{minipage}[t]{0.49\textwidth}
\centering
\includegraphics[width = \textwidth]{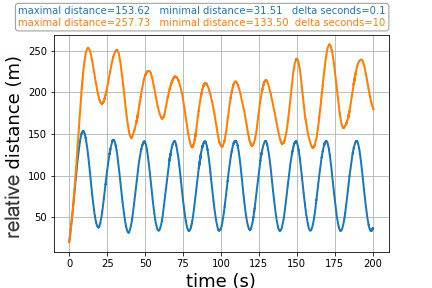}
\end{minipage}
\begin{minipage}[t]{0.49\textwidth}
\centering
\includegraphics[width = \textwidth]{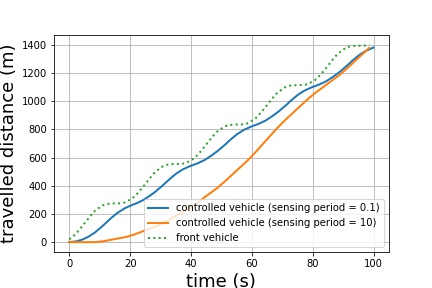}
\end{minipage}
\begin{minipage}[t]{0.49\textwidth}
\centering
\includegraphics[width = \textwidth]{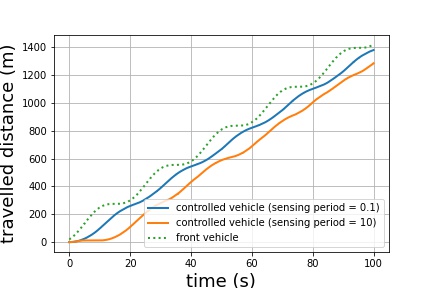}
\end{minipage}
\begin{minipage}[t]{0.49\textwidth}
\centering
\includegraphics[width = \textwidth]{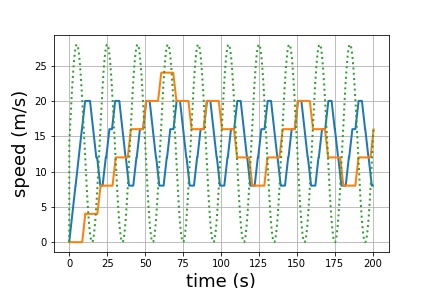}
\end{minipage}
\begin{minipage}[t]{0.49\textwidth}
\centering
\includegraphics[width = \textwidth]{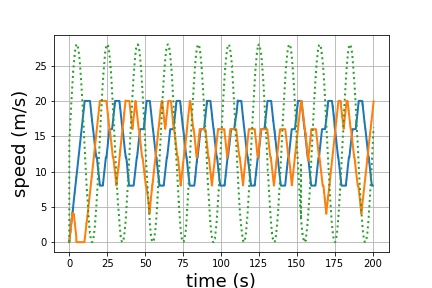}
\end{minipage}
\caption{Simulation results for the synchronous (left part) and the asynchronous (right part) controller
 for two different sensing periods ($T \in \{0.1~s, 10~s\}$) and $T_{f} = 20 ~ s$, $n=8$. }
\label{sync-res-3}
\end{figure}

\textbf{Setting 1}: First, we evaluate how $T_{f}$ affects the performance for two different values $T_{f} = 10 ~ s$ and $T_{f} = 30 ~ s$.
 We assume that the environment updates the free distance variable with period $T = 0.02~s$,
 and that the asynchronous controller estimates this variable with period $\triangle t = 0.005~s$
 (as shown in the controller in Fig.\ref{controller2}).
 
 Fig.\ref{sync-res-1} compares the simulation results for the synchronous (left part) and the asynchronous controller (right part).
 The top two figures compare the dynamics of the free distances.
 For both controllers, the relative distance is periodic with period $T_{f}$ in the steady regime.
 It decreases for increasing period $T_{f}$.
 For instance, it is around $57.27~m$ for $T_{f} = 10 ~s$, while for $T_{f} = 30 ~s$ it is $20.11~m$.
 In fact, for slower speed changes,
 the controller has more time to adjust the movement of the controlled vehicle and can better utilize the available distance.

Note that the results are similar for the two controllers when period $T_{f}$ changes.
 A minor difference is that the minimal relative distances are smaller for the asynchronous controller: when $T_{f} = 30~s$,
 the minimal relative distance of $20.11~m$ for the synchronous controller reduces to $17.78~m$ for the asynchronous controller.
 The bottom figures compare the speed changes of the controlled vehicle (solid lines)
 in response to the speed changes of the front vehicle (dotted lines).
 Similarly, the speed of the controlled vehicle gets closer to $v_f$ for increasing period $T_{f}$.
 For both controllers, the maximal speeds increase from $16 ~ m/s$ to $20 ~m/s$ when the period increases from $10~s$ to $30~s$.

In the second set of experiments,
 we evaluate how the number of speed levels $n$ affects the performance of the two controllers
 for four different values of $n=2$, $n=4$, $n=6$ and $n=8$.
 The period of the front vehicle's speed function is $T_{f} = 20 ~ s$ and the sensing period is $T = 0.02~s$.

The results for the synchronous controller are shown in the left part of Fig.\ref{sync-res-2}.
 We can see that for decreasing number of speed levels, the relative distance increases.
 For instance, when $n=8$, the minimal relative distance is $33.32 ~ m$, which becomes $60.49$ when $n=2$.
 Thus, occupancy deteriorates when less speed levels are used.
 This result simply confirms a consequence of Theorem 3.
 The bottom figure compares the speed changes of the controlled vehicle in response to the speed changes of the front vehicle.
 The result similarly shows that the maximal speed of the controlled vehicle increases from $16~ m/s$ to $20~m/s$
 with the number of speed levels.
 However, beyond a certain number, the performance improvement is negligible:
 e.g., the speed curves for $n=6$ and $n=8$ are almost identical.

For the asynchronous controller, similar results are shown in the right part of Fig.\ref{sync-res-2}.
 For decreasing number of speed levels, the relative distance increases.
 The minimal relative distance changes from $33.02$ to $57.61~m$ when the speed level reduces from 8 to 2.
 These benchmarks show that varying the number of speed levels
 does not result in significant performance differences between synchronous and asynchronous controller.
 However, the relative distances for the considered speed levels are slightly smaller for the asynchronous controller.

In the third set of experiments, we evaluate how the period $T$ of sensing the free distance ahead affects the performance
 for  $T = 0.1~s$ and $T = 10~s$ with $n=8$ and $T_{f} = 20 ~ s$.
The results for the synchronous controller are shown in the left part of Fig.\ref{sync-res-3}.
 Note the transient behavior when the period $T$ increases.
 In the speed diagram the controlled vehicle accelerates from 0 to the highest speed $24~m/s$
 because the largest relative distance is reached only in the transient phase.
 Furthermore, for increasing period, the range of the relative distance increases.
 When $T = 0.1~s$, the relative distance changes in the interval $[34.16, 153.14]~m$ , which becomes $[12.30, 161.60]~m$ when $T = 10~s$
 as the uncertainty about the free distance increases with the period.

For the asynchronous controller, the results are shown in the right part of Fig.\ref{sync-res-3}.
 Note that period of sensing has significant impact on the performance of the asynchronous controller.
 For increasing period,  the relative distance considerably increases.
 Furthermore, compared to the synchronous controller, there is no obvious oscillation during the transient phase.
 This is because the asynchronous controller computes an estimation of the free distance at each local time step.
 As a result, the controlled vehicle will not accelerate to the highest speed $24~m/s$, which is the possible for the synchronous controller.

\begin{figure}[htbp]
\centering
\begin{minipage}[t]{0.48\textwidth}
\centering
\includegraphics[width = \textwidth]{group1-sync_dis.jpg}
\end{minipage}
\begin{minipage}[t]{0.48\textwidth}
\centering
\includegraphics[width = \textwidth]{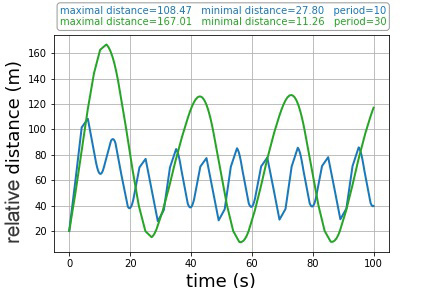}
\end{minipage}
\begin{minipage}[t]{0.48\textwidth}
\centering
\includegraphics[width = \textwidth]{group1-sync_pos.jpg}
\end{minipage}
\begin{minipage}[t]{0.48\textwidth}
\centering
\includegraphics[width = \textwidth]{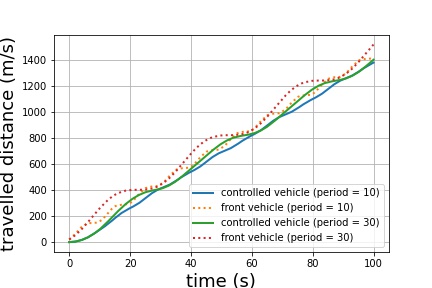}
\end{minipage}
\begin{minipage}[t]{0.48\textwidth}
\centering
\includegraphics[width = \textwidth]{group1-sync_vel.jpg}
\end{minipage}
\begin{minipage}[t]{0.48\textwidth}
\centering
\includegraphics[width = \textwidth]{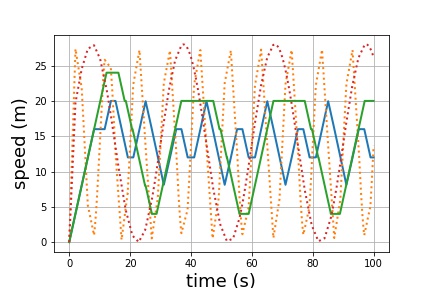}
\end{minipage}
\caption{Simulation results for the synchronous controller with and without considering the braking distance of the front vehicle (right and left part, respectively) for $T_f \in \{10~s, 30~s\}$. }
\label{res-4}
\end{figure}

\begin{figure}
\centering
\begin{minipage}[t]{0.49\textwidth}
\centering
\includegraphics[width = \textwidth]{group2-sync_dis.jpg}
\end{minipage}
\begin{minipage}[t]{0.49\textwidth}
\centering
\includegraphics[width = \textwidth]{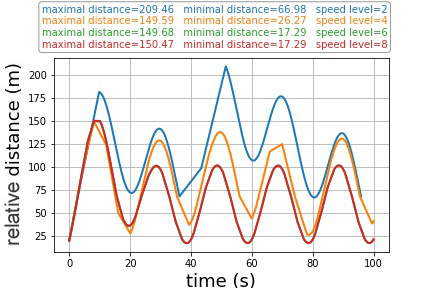}
\end{minipage}
\begin{minipage}[t]{0.49\textwidth}
\centering
\includegraphics[width = \textwidth]{group2-sync_pos.jpg}
\end{minipage}
\begin{minipage}[t]{0.49\textwidth}
\centering
\includegraphics[width = \textwidth]{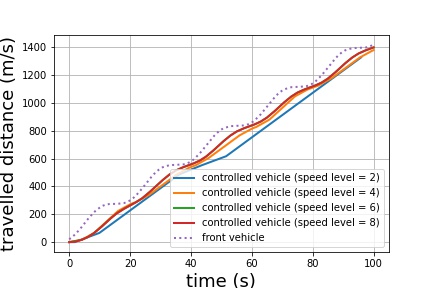}
\end{minipage}
\begin{minipage}[t]{0.49\textwidth}
\centering
\includegraphics[width = \textwidth]{group2-sync_vel.jpg}
\end{minipage}
\begin{minipage}[t]{0.49\textwidth}
\centering
\includegraphics[width = \textwidth]{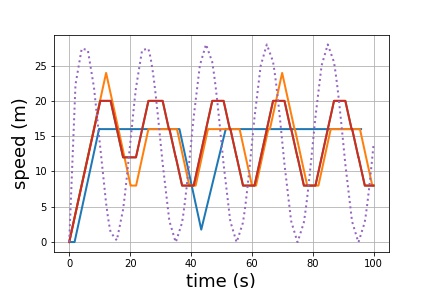}
\end{minipage}
\caption{Simulation results for the synchronous controller
 with and without considering the braking distance of the front vehicle (right and left part, respectively)
 for four different speed levels $n \in \{2, 4, 6, 8\}$.}
 \label{res-5}
\end{figure}

\begin{figure}
\centering
\begin{minipage}[t]{0.49\textwidth}
\centering
\includegraphics[width = \textwidth]{group3-sync_dis.jpg}
\end{minipage}
\begin{minipage}[t]{0.49\textwidth}
\centering
\includegraphics[width = \textwidth]{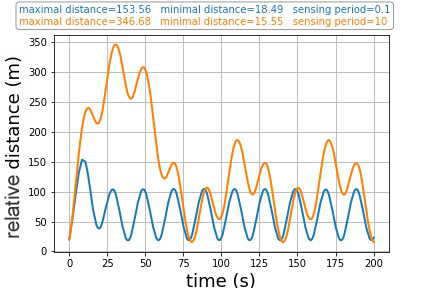}
\end{minipage}
\begin{minipage}[t]{0.49\textwidth}
\centering
\includegraphics[width = \textwidth]{group3-sync_pos.jpg}
\end{minipage}
\begin{minipage}[t]{0.49\textwidth}
\centering
\includegraphics[width = \textwidth]{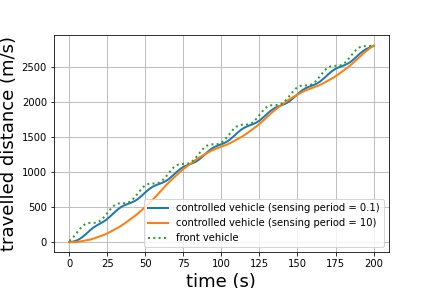}
\end{minipage}
\begin{minipage}[t]{0.49\textwidth}
\centering
\includegraphics[width = \textwidth]{group3-sync_vel.jpg}
\end{minipage}
\begin{minipage}[t]{0.49\textwidth}
\centering
\includegraphics[width = \textwidth]{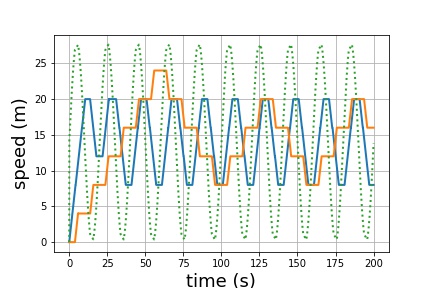}
\end{minipage}
\caption{Simulation results for the synchronous controller
 with and without considering the braking distance of the front vehicle (right and left part, respectively) for two sensing periods $T \in \{0.1~s, 10~s\}$.}
\label{res-6}
\end{figure}

\textbf{Setting 2}:
 in the subsequent experiments, we evaluate how performance changes 
 when we take into account in the evaluation of the free distance the braking distance of the front vehicle travelling with speed $v_{f}(t) = v_{f0} + v_{f0} * sin(\omega * t)$.
 At time $t$ its braking distance is $v_{f}(t)^{2}/2*b_{f}$ for a constant braking rate $b_{f}$.
 We take $b_{f} = 5 m/s^{2}$ for experimental purposes.
 Then the corresponding safe accelerating and braking distances are $D'_{i} = D_{i} - v_{f}(t)^{2}/2*b_{f}$ and $B'_{i} = B_{i} - v_{f}(t)^{2}/2*b_{f}$.

 In Fig.\ref{res-4}, Fig.\ref{res-5} and Fig.\ref{res-6},
 we compare the performance of the synchronous controller
 with and without considering the braking distance of the front vehicle (right and left part, respectively).
 In Fig.\ref{res-4} we compare the results for $T_f \in \{10~s, 30~s\}$ with sensing period $T = 0.02 ~ s$ and speed level $n=8$.
 We can see that when taking into account the braking distance of the front vehicle, the relative distance between the two vehicles becomes much smaller.
 For instance, the minimal relative distance decreases from $20.11 ~ m$ for $T_{f} = 30 ~s$ to $11.26 ~ m$.
 Furthermore, the distance decreases when the period $T_{f}$ increases as shown in the top figures in Fig.\ref{res-4}.
 The performance improvement can also be observed from the speed diagrams shown in the bottom of Fig.\ref{res-4}.
 Finally, the speed range of the controlled vehicle becomes larger and
 the maximal speed increases from $16~m/s$  to $20~m/s$ for $T_f = 10~s$.

Similar results are shown in Fig.\ref{res-5} and Fig.\ref{res-6},
 which compare the results for four different numbers of speed levels  $n \in \{2, 4, 6, 8\}$ 
 and two different sensing periods $T \in \{0.1~s, 10~s\}$, respectively.
 The relative distance reduces significantly when taking into account the braking distance of the front vehicle.
 For instance,  the minimal free distance drops from $33.32~m$ to $17.29~m$ when $n=8$. 
 These results confirm that by taking into account the movements of the front vehicle,
 the controlled vehicle can move more aggressively and better utilize the free distance ahead.

\section{Conclusions and future work}
\label{conclusions}

The paper presents a novel framework and approach for safe and efficient collision avoidance for self-driving vehicles.
 The framework is model-based and assumes that control policies are implemented as the application of acceleration, braking and constant speed commands.
 The presented algorithms do not make any assumption about the dynamics of the controlled vehicle except
 that there are functions giving the traveled distance when the speed of the vehicle changes by some quantity.
 Additionally, the assumptions about the vehicle's environment are minimal as it is described by a function $F(t)$ giving at any time the free available distance ahead.
 
The assumption that all vehicles move in the same direction as the controlled vehicle does not limit the generality of our approach.
 The same algorithm can be applied by adequately taking into account movements in the different directions in the estimation of the free headway distance.
 If for instance the distance between the controlled vehicle and the front vehicle moving in the opposite direction is $F'$ at time $t$,
 then the free space ahead can be estimated as $F = F' - D(\triangle t)$,
 where $\triangle t$ is the time needed for the controlled vehicle to completely stop by braking from its current speed,
 and $D(\triangle t)$ is the maximal distance travelled by the front vehicle within time $\triangle t$.
 
The same algorithm can be adapted to two-dimensional movement. 
 In that case, the function $F(t)$ can be defined as the maximal convex area containing the controlled vehicle and such that all the obstacles are outside this area.  
 The distances $A(V,v)$ and $B(V,v)$ for initial and target speeds respectively are also replaced by adequately approximated convex areas 
 so that the safety test boils down to area inclusion that can be efficiently decided.

The presented approach differs from others based on control theory or controller synthesis in that it guarantees at a high level of abstraction both safety and efficiency.
 We progressively relax the assumption about perfect real-time knowledge of the free space and provide solutions that are safe and relatively efficient even
 when the free space is sporadically updated.
 Furthermore, switching between a set of speed levels depending on pre-computed conditions drastically reduces the computational complexity of the decision process.
 This also allows to reduce the control algorithm sensitivity to changes of the environment while keeping driving safe and robust.
 Experimental results show that it is easy to get implementations under minimal assumptions about the operational environment.

This work is part of a larger project on the design of safe and efficient autopilots for self driving cars.
 Future developments include the adaptation of this algorithm to two-dimension movement
 where the free space function provides areas around the controlled vehicle as well as  the integration of our algorithms in autonomous car models of the Carla simulator.


\bibliographystyle{splncs03}
\bibliography{main}

\end{document}